\begin{document}

\newcommand{\eg}{e.g.}
\newcommand{\ie}{i.e.}
\newcommand{\etc}{etc}
\newcommand{\wrt}{w.r.t.~}
\newtheorem{prop}{Proposition}
\newtheorem{proof}{Proof}

\title{Partial Relaxed Optimal Transport for \mbox{Denoised Recommendation}}

\author{
        Yanchao~Tan,
        Carl~Yang~\IEEEmembership{Member,~IEEE}, 
        Xiangyu~Wei,
        Ziyue~Wu,
        Xiaolin~Zheng~\IEEEmembership{Senior Member,~IEEE}
        \IEEEcompsocitemizethanks{
        \IEEEcompsocthanksitem Y. Tan,  X. Wei, and X. Zheng (Corresponding Author) are with 
        the College of Computer Science, Zhejiang University, Hangzhou 310027, China, E-mail:\{yctan, weixy, xlzheng\}@zju.edu.cn.
        \IEEEcompsocthanksitem C. Yang is with the Department of Computer Science, Emory University, Atlanta 30322, United State, E-mail: j.carlyang@emory.edu.
        \IEEEcompsocthanksitem Z. Wu is with the School of Management, Zhejiang University, Hangzhou 310058, China, E-mail: wuziyue@zju.edu.cn.
        
        }
}

\markboth{Journal of \LaTeX\ Class Files,~Vol.~14, No.~8, August~2021}%
{Shell \MakeLowercase{\textit{et al.}}: A Sample Article Using IEEEtran.cls for IEEE Journals}


\maketitle






\begin{abstract} \small\baselineskip=9pt 
The interaction data used by recommender systems (RSs) inevitably include noises resulting from mistaken or exploratory clicks, especially under implicit feedbacks. Without proper denoising, RS models cannot effectively capture users' intrinsic preferences and the true interactions between users and items. To address such noises, existing methods mostly rely on auxiliary data which are not always available. In this work, we ground on Optimal Transport (OT) to globally match a user embedding space and an item embedding space, allowing both non-deep and deep RS models to discriminate intrinsic and noisy interactions without supervision. Specifically, we firstly leverage the OT framework via Sinkhorn distance to compute the continuous many-to-many user-item matching scores. Then, we relax the regularization in Sinkhorn distance to achieve a closed-form solution with a reduced time complexity. Finally, to consider individual user behaviors for denoising, we develop a partial OT framework to adaptively relabel user-item interactions through a personalized thresholding mechanism. Extensive experiments show that our framework can significantly boost the performances of existing RS models.
\end{abstract}

\begin{IEEEkeywords}
Recommender systems, denoised recommendation, optimal transport, implicit feedback
\end{IEEEkeywords}

\section{Introduction}
\label{sec:intro}
With the rapid growth of various activities on the Web, recommender systems (RSs) become fundamental in helping users alleviate the problem of information overload. 
However, users can click some items by mistake or out of curiosity, and many RSs will also recommend some less relevant items for exploration every now and then. 
Take movie watching for example. Since a user cannot always distinguish \textsf{horror} movies from \textsf{romantic} ones by movie names, he/she can easily click a \textsf{horror} movie by mistake among the many \textsf{romantic} movies he/she usually watches. In this case, a traditional method like CF cannot effectively reduce the probability of recommending a \textsf{horror} movie unless the noisy clicks are extremely rare, which may further seduce even more noisy clicks.


Recently, to get out of such mistakes, existing research discovers users' intrinsic preferences under the help of external user behaviors \cite{kim2014modeling}, auxiliary item features \cite{lu2019effects} or extra feedbacks \cite{yang2012exploiting}. A key limitation with them is their reliance on auxiliary data, which costs significant effort and is not always attainable in RSs.
Without supervision, can we develop a framework to automatically denoise user-item interactions for a more accurate recommendation?

In this work, we approach the denoised recommendation problem in two steps: 
\begin{enumerate}
\item distinguishing intrinsic and noisy interactions; 
\item stressing intrinsic preferences and reducing noisy ones for recommendation. 
\end{enumerate}
Specifically, inspired by a least modeling effort principle in an unsupervised fashion \cite{DBLP:journals/pami/CourtyFTR17,li2019learning,wang2020denoising}, we refer to noises as \textit{minor abnormal data that needs higher modeling effort} than the intrinsic ones. 

Based on the above rationale, we propose to distinguish noises by ranking the global matching cost between user and item embedding spaces, and flexibly integrate it with both non-deep and deep RS methods.
In this way, the key to distinguishing noises boils down to finding a global matching matrix with the minimum matching cost.
Subsequently, we advocate a principled denoised RS framework and calculate the matching matrix grounding on \textit{Optimal Transport} (OT), which has been introduced to address the unsupervised matching problem between two distributions or spaces in many fields \cite{bhushan2018deepjdot,huynh2020otlda}. 
Through minimizing the overall cost of the mismatched user-item pairs, OT finds a global optimal matching matrix, whose values represent the matching cost (\ie, modeling effort). Specifically, \textit{the low user-item matching cost indicates a user's intrinsic preferences while the high value indicates noisy ones}.

However, to apply OT to achieve denoised recommendation, three problems still remain: 
\begin{enumerate}
    \item Unlike the one-hot constraint in previous applications of OT, the nature of RSs requires a many-to-many matching; 
    \item The large number of users and items makes the matching process time-consuming; 
    \item A mechanism is needed to properly leverage the learned matching matrix and finally eliminate the effect of noises for denoised recommendation.
\end{enumerate}

To address these challenges, we first leverage the OT framework via Sinkhorn distance  \cite{cuturi2013sinkhorn}. Unlike a one-hot matching matrix that satisfies the requirements of the many widely studied tasks (\eg, transfer learning \cite{bhushan2018deepjdot,xujoint}), we compute a continuous many-to-many approximation to the discrete OT, so as to meet the nature of RSs.
Furthermore, we propose to relax the regularization in Sinkhorn to achieve a closed-form solution, which can reduce the time complexity for large-scale data from $\mathcal{O}(\max(M, N)^3)$ to $\mathcal{O}(\max(M, N)^2)$ .
Then, we rank the learned continuous matching cost to differentiate intrinsic and noisy interactions for each user.
Finally, we take into account that individual users' behaviors can vary in RSs, which should be handled differently during denoising. To this end, we design a personalized thresholding mechanism for relabeling user-item interactions, which efficiently finds a splitting point for each individual user according to the ranked matching cost. The whole Partial relaxed optimal Transport for denoised Recommendation (ProRec).  

Experiment results on one real-world dataset with synthesized noises demonstrate our proposed ProRec's ability of denoising under various levels of noises. Further experiments on three original real-world datasets also verify the advantages of ProRec, which demonstrate its significant performance boosts upon multiple popular RS models, in terms of effectiveness and efficiency.

Our main contributions are summarized as follows:
\begin{itemize}
    \item We propose a novel denoised recommendation framework based on the Optimal Transport (OT) theory, which discriminates intrinsic and noisy user-item interactions without supervision. 
    \item We propose a ProRec framework to meet the nature of RSs and reduce the time complexity for large-scale data from $\mathcal{O}(\max(M, N)^3)$ to $\mathcal{O}(\max(M, N)^2)$, by ranking and thresholding the many-to-many matching matrix between users and items.
    \item We conduct extensive experiments on four public datasets to verify the efficacy of our proposed framework. The experimental results demonstrate strong performances against state-of-the-art RS models. 
\end{itemize}

\section{Related work}
In this section, we provide literature review on two relevant lines, \ie, recommender systems and optimal transport.
\subsection{Recommender Systems}
Many research studies have pointed out that there exist a large proportion of noisy interactions in the implicit feedbacks (\ie, the clicks) \cite{hu2008collaborative,kim2014modeling}. These feedbacks are easily affected by different factors, such as the position bias \cite{jagerman2019model}, caption bias \cite{lu2018between} and exposure bias \cite{hu2008collaborative,schnabel2016recommendations}. Therefore, there exist large gaps between the implicit feedbacks and the actual user preferences due to various reasons, which are detrimental to the users' continuous behaviors and overall satisfaction \cite{lu2019effects,wang2020denoising}.

To alleviate the above issues, many researchers have considered incorporating auxiliary feedbacks to the identification of noises in the implicit feedbacks, such as dwell time \cite{kim2014modeling} and scroll intervals \cite{lu2018between}. 
These works usually design features manually and label items with external help (\eg, from domain experts) \cite{lu2018between,lu2019effects}, which require extensive effort and cost, which are not always available across RSs.
To this end, sheer observations on training losses have recently been explored to denoise the implicit feedbacks during training in an unsupervised fashion \cite{wang2020denoising}. The method is purely heuristic and can be integrated with deep learning methods only.
Based on a principle of the least modeling effort in different fields \cite{DBLP:journals/pami/CourtyFTR17,jiang2018mentornet}, our proposed framework aims to denoise the user-item interactions without supervision, which is supported by Optimal transport (OT) theory with guaranteed optimality in the global matching and can be flexibly integrated with both deep and non-deep RS methods.

\subsection{Optimal Transport} 
OT aims to find a matching
between two distributions or spaces, which has been most widely used in transfer learning. It can match one instance in the source domain to another in the target domain with the least cost. Existing works mainly study discrete OT for one-to-one matching. For example, in computer vision, OT has been applied to domain adaption \cite{DBLP:journals/pami/CourtyFTR17,bhushan2018deepjdot} and style transfer \cite{kolkin2019style}. OT has also been successfully applied in many natural language processing tasks, such as topic modeling \cite{huynh2020otlda,kusner2015word}, text generation \cite{chen2018adversarial}, and sequence-to-sequence learning \cite{DBLP:conf/iclr/ChenZZTGZLSCC19}. 
Considering the effect of non-matched pairs, \cite{xujoint} proposed a partial one-to-one matching that leverages one shared threshold to address domain shift for domain adaption instead of taking all pairs into account. 

However, the above discrete one-to-one matching cannot satisfy the nature of many-to-many user-item relationships in RSs, where each user can interact with many items and each item can interact with many users.
Though OT is applied into cold-start recommendation problem in \cite{meng2020wasserstein}, they directly measure user-item similarity without taking the effect of noises into consideration. 
Moreover, partial matching with a shared threshold cannot reflect the individual user behaviors, and thus leads to suboptimal recommendation performances.

\begin{figure*}
    \centering
    \includegraphics[width=0.95\linewidth]{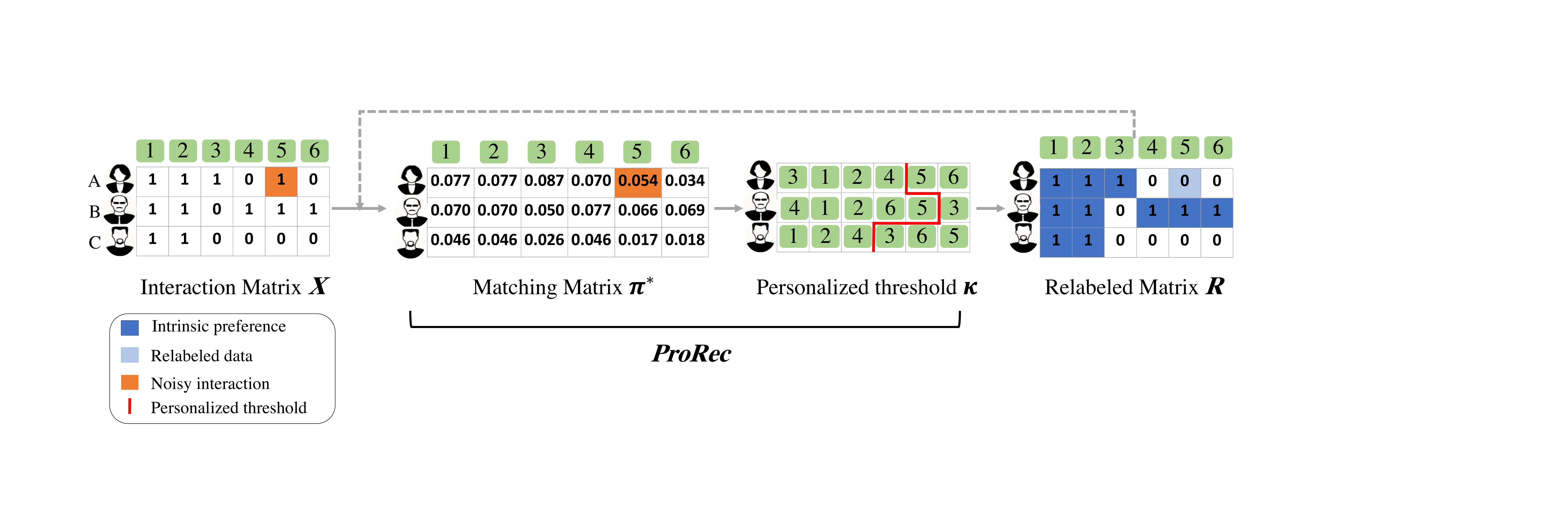}
    \caption{A toy example illustrating the denoised recommendation based on our ProRec framework.}
    \label{fig:model}
\end{figure*}
\section{Methodology}
In this section, we detail the proposed Partial relaxed optimal Transport for denoised Recommendation (ProRec) framework, as shown in Figure \ref{fig:model}. 
Specifically, we first formulate the denoised recommendation as a two-step process.
Then, we introduce the Optimal Transport (OT) framework for distinguishing noises through a global user-item matching. 
Furthermore, we elaborate our proposed ProRec framework, which can be applied to achieve denoised recommendation.

\subsection{Problem Statement}
Following the setting of recommendation, we denote the user-item interaction matrix as $\bm{X} \in \{0, 1 \}^{M \times N}$, where $M$ denotes the number of users and $N$ denotes the number of items. We construct user-item relationships across a \textit{user embedding space} $\mathcal{D}^u$ and an \textit{item embedding space} $\mathcal{D}^v$. The problem is that user-item pairs contain some noisy interactions which need to be relabeled. 

In this work, we approach denoised recommendation through two steps. Firstly, we distinguish intrinsic and noisy interactions across $\mathcal{D}^u$ and $\mathcal{D}^v$ by a global matching matrix $\bm{\pi}^*$.
Then, to stress users' intrinsic preferences and reduce noises, we rank the learned $\bm{\pi}^*$ and keep only top $\bm{\kappa}$ interactions of each user via a personalized thresholding mechanism. In this way, we learn a relabeled matrix $\bm{R} \in R^{M \times N}$ for denoised recommendation.

\subsection{Global Matching for Distinguishing Noises}
As motivated in Section \ref{sec:intro}, a framework that can denoise user-item interactions for more accurate recommendation is in great need. 
However, without supervision, it is hard to define noisy interactions and then stress intrinsic preferences together while reducing noisy ones.

To distinguish noises in an unsupervised fashion, several works \cite{DBLP:journals/pami/CourtyFTR17,jiang2018mentornet,xu2018social,wang2020denoising} have pointed out a principled way to locate noises by finding the data that needs the most modeling effort or cost.
Inspired by this principle of least modeling effort, in this work, we propose to refer to the noises as minor abnormal data. Compared with the interactions from intrinsic preferences, noises take much effort or cost to model.

Based on the above least effort rationale, we propose a novel denoising framework to distinguish noises by ranking the global matching between user and item embedding spaces. 
The framework is served as a plug-in, so as to be flexibly integrated to both non-deep and deep RS methods.
In this scenario, the key to distinguishing noises boils down to finding a global user-item matching across the user space $\mathcal{D}^u$ and the item space $\mathcal{D}^v$ with minimal matching cost. 

To address the unsupervised matching problem between two spaces, we advocate a principled denoised RS framework and ground on \textit{Optimal Transport} (OT), which has been successfully applied in many fields such as CV \cite{bhushan2018deepjdot} and NLP \cite{huynh2020otlda}.
As studied theoretically in \cite{gretton2012optimal}, 
the OT formulation derived from the Earth Movers Distance (EMD) can be expressed as
\begin{equation}
\begin{aligned}
    &\min_{\bm{\pi} \in \mathcal{X}(\mathcal{D}^u,\mathcal{D}^v)} \sum_{i=1}^M \sum_{j=1}^N \mathcal{\bm{M}}_{ij}\pi_{ij}, \\
    s.t. \,\, &\bm{1}_M\bm{\pi} = \frac{1}{N} \bm{1}_N, \bm{1}_N\bm{\pi}^T = \frac{1}{M}\bm{1}_M, \\
\end{aligned}
\label{eq:emd}
\end{equation}
where $\mathcal{X}(\mathcal{D}^u,\mathcal{D}^v)$ denotes the joint probability distribution of user embedding space $\mathcal{D}^u$ and item embedding space $\mathcal{D}^v$. $\bm{1}_M \in \mathbb{R}^{1\times M},\bm{1}_N \in \mathbb{R}^{1\times N}$ with all elements equal to 1. $M$ is the number of users and $N$ is the number of items. $\bm{\pi}^*$ denote the optimal transport plan.

Specifically, the matching cost matrix $\mathcal{M}$ directly reflects the modeling effort, and thus the low user-item matching cost can indicate users' intrinsic preferences while the high values can indicate noises. 
Based on the definition of $\mathcal{M}$, we can integrate the denoising process with both non-deep and deep RS models as follows:
\begin{equation}
\mathcal{L}(\bm{U},\bm{V}) = ||\bm{X} - \bm{U}\bm{V}^{T}||_F^2 + \zeta (||\bm{U}||_2^2 + ||\bm{V}||_2^2),
\label{eq:nondeep}
\end{equation}
\begin{equation}
\mathcal{L}(\bm{U},\bm{V}) = \mathcal{C}(\bm{X},G_y(\bm{U}, \bm{V}))+ \zeta (||\bm{U}||_2^2 + ||\bm{V}||_2^2),
\label{eq:deep}
\end{equation}
where $\bm{U} \in \mathbb{R}^{M \times d}$ and $\bm{V} \in \mathbb{R}^{N \times d}$ represent the user and item embeddings. $G_y$ denotes a classification network. The hyperparameter $\zeta$ is to balance the loss and the regularization term.
In non-deep learning models, Eq.~\ref{eq:nondeep} can be solved effectively through many methods (\eg, Alternative Least Square).
Since the prediction scores represent the user-item similarities where $\mathcal{M}$ has negative values, we calculate the cost matrix as $\mathcal{M} = -\bm{U}\bm{V}^T$ by using inner product. 
In deep learning models, we can optimize Eq.~\ref{eq:deep} by adopting the widely used binary cross-entropy loss (\ie, $\mathcal{C}(\cdot , \cdot)$). Then, similar calculations can be made for non-deep learning ones, where $\mathcal{M} = -G_y(\bm{U},\bm{V})$.



However, there are three technical challenges when applying OT to denoised recommendation: 

\noindent\textbf{Challenge 1. }In RSs, each user can interact with many items and each item interact with many users, which corresponds to a many-to-many interaction matrix. Unlike previous applications of OT in CV or NLP, directly applying the one-hot constraint specified in Eq.~\ref{eq:emd} is counter-intuitive for recommendation.

\noindent\textbf{Challenge 2. }In RSs, the number of users and items in the real world tends to be large, making the matching process time-consuming.

\noindent\textbf{Challenge 3. }Besides distinguishing noises, it is necessary to design a mechanism to eliminate the effect of noisy interactions. A na\"{i}ve way is to set one global thresholding hyperparameter $\kappa$ to keep the top $\kappa$ items with the least matching cost for each user.
However, the global thresholding hyperparameter cannot accommodate individual user behaviors.


\subsection{ProRec for Denoised Recommendation}
In this section, we propose a Partial relaxed optimal Transport for denoised Recommendation (ProRec) framework to address the above three challenges.

\noindent\textbf{Many-to-many matching to meet the nature of RSs. }
To address Challenge 1, we look for a smoother version of OT by lowering the sparsity and increasing the entropy of the matching matrix \cite{DBLP:journals/pami/CourtyFTR17}.
Among the many OT algorithms (\eg, ADMM and Proximal Point Method \cite{boyd2011distributed,parikh2014proximal}), we choose the Sinkhorn algorithm \cite{cuturi2013sinkhorn} due to its simplicity and efficiency.
We achieve OT with a many-to-many matching through the Sinkhorn algorithm as follows:
\begin{equation}
\begin{aligned}
    &\min_{\bm{\pi} \in \mathcal{X}(\mathcal{D}^u,\mathcal{D}^v)} \sum_{i=1}^M \sum_{j=1}^N \mathcal{\bm{M}}_{ij}\pi_{ij} + 
\gamma H(\bm{\pi}) \\
    & s.t.\,\, \bm{1}_M\bm{\pi} = \bm{p} , \bm{1}_N\bm{\pi}^T = \bm{q},
\end{aligned}
\label{eq:sinkhorn}
\end{equation}
where $H(\bm{\pi}) = \sum_{i=1}^M \sum_{j=1}^N \pi_{ij}\left( \log (\pi_{ij})-1\right)$. $\bm{1}_M \in \mathbb{R}^{1\times M},\bm{1}_N \in \mathbb{R}^{1\times N}$ are with all elements equal to 1. $\gamma$ is a hyperparameter to balance the entropy regularization and OT. 

Compared with most studies of discrete OT in CV, the major advantages of the Sinkhorn algorithm are as follows:
\begin{itemize}
    \item By relaxing the discrete constraint, we can obtain a continuous matrix instead of a one-hot one. In this way, the obtained matching matrix can capture the many-to-many relationships between users and items.
    \item By adding an entropy regularization, we alleviate the sparsity problem of $\bm{\pi}^*$ and can use dense cost matrix for subsequent ranking.
    \item By modifying the distribution constraint of $\bm{\pi}$ from uniform (\ie, $\bm{1}_M\bm{\pi} = \frac{1}{N}\bm{1}_N$) to popularity-based (\ie, $\bm{1}_M\bm{\pi} = \bm{p}$), we model the users and items that have different number of interactions, and successfully adapt the optimization to the recommendation scenario.
    Specifically, we count the interactions and then normalize them to obtain $\bm{p}$ and $\bm{q}$ for the distributions of items and users, respectively, where $\sum_{j=1}^N \bm{p}_j = 1$ and $\sum_{i=1}^M \bm{q}_i = 1$.
\end{itemize}

\noindent\textbf{Relaxed OT to reduce the time complexity. }Note that the optimal $\bm{\pi}^*$ is dense and needs to be computed through multiple iterations. Therefore, it is time-consuming in RSs with large numbers of users and items.
Inspired by the Relaxed OT for EMD \cite{kusner2015word}, we extend the original Sinkhorn algorithm with a relaxed regularization. Specifically, we use two auxiliary distances, each essentially is the Sinkhorn with only one of the constraints in Eq.~\ref{eq:sinkhorn}:
\begin{equation}
    \min_{\bm{\pi}} \sum_{i=1}^M \sum_{j=1}^N \mathcal{\bm{M}}_{ij}\pi_{ij} + \gamma H(\bm{\pi}) \quad s.t.\,\,\bm{1}_M\bm{\pi} = \bm{p}, 
\label{eq:p}
\end{equation}
\begin{equation}
    \min_{\bm{\pi}} \sum_{i=1}^M \sum_{j=1}^N \mathcal{\bm{M}}_{ij}\pi_{ij} + \gamma H(\bm{\pi}) \quad s.t.\,\, \bm{1}_N\bm{\pi}^T = \bm{q}.
\label{eq:q}
\end{equation}
\begin{prop}
The constraint of problem in Eq.~\ref{eq:sinkhorn} is relaxed by Eq.\ref{eq:p} and Eq.~\ref{eq:q}. By defining $\Tilde{\bm{\mathcal{M}}} = e^{-\frac{\mathcal{\bm{M}}}{\gamma}}$, the closed-form solution for the global optimal matching matrix is:
\begin{equation}
    \bm{\bm{\pi}^*} = \max(\Tilde{\mathcal{\bm{M}}} \operatorname{diag}(\bm{p} \oslash \bm{1}_M \Tilde{\mathcal{\bm{M}}}),
    \operatorname{diag}(\bm{q} \oslash \bm{1}_N \Tilde{\mathcal{\bm{M}}}^T) \Tilde{\mathcal{\bm{M}}}),
\label{eq:rot}
\end{equation}
where $\oslash$ corresponds to the element-wise division.
\label{proposition:rot}
\end{prop}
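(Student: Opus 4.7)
The plan is to treat the two relaxed problems in Eq.~\ref{eq:p} and Eq.~\ref{eq:q} as independent entropy-regularized convex programs, each with a single linear marginal constraint, solve each in closed form via Lagrangian duality, and then combine the two solutions via the element-wise maximum as claimed in Eq.~\ref{eq:rot}.

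First, I would form the Lagrangian of Eq.~\ref{eq:p} using a dual vector $\bm{\lambda}\in\mathbb{R}^N$ for the column-marginal constraint $\bm{1}_M\bm{\pi}=\bm{p}$:
\begin{equation*}
J_1(\bm{\pi},\bm{\lambda}) = \sum_{i,j}\mathcal{M}_{ij}\pi_{ij} + \gamma\sum_{i,j}\pi_{ij}(\log\pi_{ij}-1) + \sum_j\lambda_j\bigl(p_j-\textstyle\sum_i \pi_{ij}\bigr),
\end{equation*}
and differentiate with respect to $\pi_{ij}$. Setting $\partial J_1/\partial \pi_{ij}=0$ yields the Sinkhorn-style scaling $\pi_{ij}=e^{\lambda_j/\gamma}\tilde{\mathcal{M}}_{ij}$; plugging back into the column-marginal constraint then pins down $e^{\lambda_j/\gamma}=p_j/(\bm{1}_M\tilde{\mathcal{M}})_j$, so that in matrix form the unique optimizer is $\bm{\pi}^{(1)}=\tilde{\mathcal{M}}\operatorname{diag}(\bm{p}\oslash\bm{1}_M\tilde{\mathcal{M}})$. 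An identical derivation applied to Eq.~\ref{eq:q}, with a dual vector $\bm{\mu}\in\mathbb{R}^M$ enforcing the row-marginal constraint, delivers $\bm{\pi}^{(2)}=\operatorname{diag}(\bm{q}\oslash\bm{1}_N\tilde{\mathcal{M}}^T)\tilde{\mathcal{M}}$. Each of these is a clean application of the KKT conditions for entropic OT, and strict convexity of the entropy term guarantees uniqueness.

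The remaining and, I expect, the main obstacle, is to justify why the element-wise maximum of $\bm{\pi}^{(1)}$ and $\bm{\pi}^{(2)}$ is declared the solution in Eq.~\ref{eq:rot}. This step is not the output of any single variational principle: each relaxed program is a lower-bound surrogate for Eq.~\ref{eq:sinkhorn} because dropping a constraint enlarges the feasible set, and, following the relaxed-WMD argument of Kusner et al., combining the two single-marginal plans by element-wise maximum yields a tighter surrogate than either alone while ensuring both marginal conditions are each covered by at least one of the two plans. A rigorous write-up would therefore present the first two steps as a direct Lagrangian calculation and the final step as a definitional combination taken from the relaxed OT literature, rather than as an equality derivable from Eq.~\ref{eq:sinkhorn} itself; the proposition is best read as asserting that the $\max$-combined plan is the natural closed-form surrogate, with computational cost dominated by a single matrix multiplication and hence $\mathcal{O}(\max(M,N)^2)$.
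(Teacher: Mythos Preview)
Your proposal is correct and follows essentially the same route as the paper: solve each single-marginal relaxation by a Lagrangian/KKT computation to obtain the two closed-form scalings of $\tilde{\mathcal{M}}$, then combine them via the element-wise maximum by appeal to the relaxed-OT literature rather than by derivation from Eq.~\ref{eq:sinkhorn}. The only cosmetic differences are that the paper writes out the $\bm{q}$-constrained case first (citing Kolkin et al.\ rather than Kusner et al.\ for the $\max$ step), and you are somewhat more explicit that the $\max$ combination is a definitional surrogate, not a consequence of optimality.
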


\begin{proof}
Based on Eq.~\ref{eq:q} and using the Lagrange multiplier method we have:
\begin{equation}
    \mathcal{L} = \sum_{i=1}^M \sum_{j=1}^N \mathcal{\bm{M}}_{ij}\bm{\pi}_{ij} + \gamma H(\bm{\pi}) - (\bm{1}_N \bm{\pi}^T - \bm{q})\bm{f}^T,
\end{equation}
where $\bm{f}^T$ is Lagrangian multipliers.

Then, in order to find the optimal solution, we let the gradient $\frac{\partial \mathcal{L}}{\partial \bm{\pi}_{ij}} = 0$, which means:
\begin{equation}
    \frac{\partial \mathcal{L}}{\partial \bm{\pi}_{ij}} = \mathcal{\bm{M}}_{ij} + \gamma \log \bm{\pi}_{ij} - \bm{f}_i = 0,
\end{equation}
In this case, we can obtain the solution of $\bm{\pi}_{ij}$ as
\begin{equation}
    \bm{\pi}_{ij} = e^{\frac{f_i-\mathcal{\bm{M}}_{ij}}{\gamma}}.
\label{eq:10}
\end{equation}

The users and items that have different interactions, which means both users and items are based on different degrees of popularity. To take such popularity into consideration, we count and norm the number of items that have interacted with user $i$. Based on the popularity of user $i$ ( \ie, $\bm{q}_i$), we add constraints of the matching matrix $\pi$ by $\sum_{j=1}^{N} \bm{\pi}_{ij} = \bm{q}_i$. Then, we have
\begin{equation}
    \sum_{j=1}^{N} e^{\frac{\bm{f}_i-\mathcal{\bm{M}}_{ij}}{\gamma}} = \bm{q}_i.
\end{equation}

Since the Lagrange multiplier $\bm{f}_i$ is not related to $j$, we can extract this part and rewrite the formula as:
\begin{equation}
    \frac{1}{\bm{q}_i} \sum_{j=1}^{N} e^{-\frac{\mathcal{\bm{M}}_{ij}}{\gamma}} = e^{-\frac{\bm{f}_i}{\gamma}}.
\label{eq:12}
\end{equation}

Take the logarithm of both sides of the above equation and move $\gamma$ to the other side:

\begin{equation}
    f_i = -\gamma \log (\frac{1}{\bm{q}_i} \sum_j e^{-\frac{\mathcal{\bm{M}}_{ij}}{\gamma}} ).
\end{equation}

To keep the formula simple and clear, we define $\Tilde{\bm{\mathcal{M}}} = e^{-\frac{\mathcal{\bm{M}}}{\gamma}}$, so the formula above can be written as:

\begin{equation}
    \bm{f} = \gamma \log{\bm{q}} - \gamma \log (\bm{1}_N \Tilde{\bm{\mathcal{M}}}^T).
\end{equation}

According to Eq.~\ref{eq:10} and Eq.~\ref{eq:12}, we have
\begin{equation}
    \bm{\pi}_{ij} = e^{\frac{\bm{f}_i - \mathcal{\bm{M}}_{ij}}{\gamma}} = \bm{q}_i \frac{e^{-\frac{\mathcal{\bm{M}}_{ij}}{\gamma}}}{\sum_j e^{-\frac{\mathcal{\bm{M}}_{ij}}{\gamma}}}.
\label{eq:15}
\end{equation}

Finally, based on Eq.~\ref{eq:15} and the definition of $\Tilde{\bm{\mathcal{M}}} = e^{-\frac{\mathcal{\bm{M}}}{\gamma}}$, we obtain the closed-form solution via a matrix form as follow
\begin{equation}
\bm{\pi}_V= \operatorname{diag}(\bm{q} \oslash \bm{1}_N \Tilde{\mathcal{\bm{M}}}^T) \Tilde{\mathcal{\bm{M}}},
\end{equation}
where $\oslash$ corresponds to element-wise division.

Following the same derivation, we can obtain the closed-form solution of Eq.~\ref{eq:p} as follow
\begin{equation}
\bm{\pi}_U=\Tilde{\mathcal{\bm{M}}} \operatorname{diag}(\bm{p} \oslash \bm{1}_M \Tilde{\mathcal{\bm{M}}}).
\end{equation}

The original Sinkhorn minimizes $\pi$ under two constraints. By relaxing one constraint, we can balance the computation efficiency and improved preformance.
To approximate the solution under the original constraints, we adopt the $\max$ operation, which is consistent with the results in previous work \cite{kolkin2019style}. So we have
\begin{equation}
    \bm{\pi}^* = \max(\bm{\pi}_U, \bm{\pi}_V)
\end{equation}
This is equivalent to:
\begin{equation}
    \bm{\pi}^* = \max(\Tilde{\mathcal{\bm{M}}} \operatorname{diag}(\bm{p} \oslash \bm{1}_M \Tilde{\mathcal{\bm{M}}}),
    \operatorname{diag}(\bm{q} \oslash \bm{1}_N \Tilde{\mathcal{\bm{M}}}^T) \Tilde{\mathcal{\bm{M}}}).
\end{equation}
\end{proof}

Note that, the closed-form solution yields a time complexity of $O$(max$(M, N)^2$). This shows that by relaxing the OT formulation with auxiliary limitation from only one side, our model can achieve a lower time complexity than the $O$(max$(M, N)^3$) reported in \cite{bhushan2018deepjdot}, which uses OT for one-to-one matching.

\begin{algorithm*}
\caption{Partial Relaxed Optimal Transport for Denoised Recommendation (ProRec)}\label{algorithm}
\KwData{The noisy user-item interactions $\bm{X}$, user popularity $\bm{p}$ and item popularity $\bm{q}$}
\KwResult{Denoised user embeddings $\bm{U}$ and item embeddings $\bm{V}$, a relabeled matrix $\bm{R}$.}

\While{not converged}
{
    \emph{1. Update $\bm{U}$ and $\bm{V}$}\;
  \lIf{Learning $\bm{U}$ and $\bm{V}$ via non-deep methods}
    {Update $\bm{U}$ and $\bm{V}$ via $||\bm{X} - \bm{U}\bm{V}^{T}||_F^2 + \zeta (||\bm{U}||_2^2 + ||\bm{V}||_2^2$ in Eq.~\ref{eq:nondeep} and calculate $\mathcal{M} = -\bm{U}\bm{V}^T$}
    \lElse
    {Update $\bm{U}$ and $\bm{V}$ via $\mathcal{C}(\bm{X},G_y(\bm{U}, \bm{V}))+ \zeta (||\bm{U}||_2^2 + ||\bm{V}||_2^2)$ in Eq.~\ref{eq:deep}, where $\mathcal{C}(\cdot , \cdot)$ is binary cross-entropy loss; Then calculate $\mathcal{M} = -G_y(\bm{U},\bm{V})$, where $G_y$ is a classification network for prediction}
    
  \emph{2. Update $\bm{X}$}\;
  Model denoised recommendation problem by minimizing $\sum_{i=1}^M \sum_{j=1}^N \mathcal{\bm{M}}_{ij}\pi_{ij} + 
\gamma H(\bm{\pi})$, where $\bm{1}_M\bm{\pi} = \bm{p},\bm{1}_N\bm{\pi}^T = \bm{q}$, in Eq.~\ref{eq:sinkhorn}\;
  Compute a global optimal matching matrix $\bm{\pi}^*$ by the closed-form solution $\bm{\bm{\pi}^*} = \max(\Tilde{\mathcal{\bm{M}}} \operatorname{diag}(\bm{p} \oslash \bm{1}_M \Tilde{\mathcal{\bm{M}}}),
    \operatorname{diag}(\bm{q} \oslash \bm{1}_N \Tilde{\mathcal{\bm{M}}}^T) \Tilde{\mathcal{\bm{M}}})$ in Eq.~\ref{eq:rot}\;
  Compute personalized thresholds $\bm{\kappa}$ by automatic splitting mechanism in Eq.~\ref{eq:kappa}\;
  Relabel user-item interaction by sorting $\bm{\kappa}$ in Eq.~\ref{eq:relabel}\;
  Update $\bm{X}$ by $\bm{X} = \lambda \bm{X} + (1-\lambda) \bm{R} \odot \bm{X}$ in Eq.~\ref{eq:newx}\;
}
\end{algorithm*}

\noindent\textbf{Personalized thresholding mechanism for denoised recommendation. }
To flexibly discriminate the intrinsic and noisy user-item interactions for individual users, we propose a non-parametric personalized thresholding mechanism.
We first normalize the matching cost by each user as ${\pi_{ij}}^{*\prime}=\frac{\pi^*_{ij}}{\sum_j \pi^*_{ij}}$, and rank them as $\bm{\rho_i} = {\rm sort} (\bm{\pi}^{*\prime}_{i})$. 
According to the definition of the matching matrix, each row represents users' preferences towards the items. We define $\bm{\kappa} = \{\kappa_1, \dots, \kappa_M\}$ as the index of the threshold which can filter out users' noisy preferences. 
$\kappa_{i}$ denotes user $i$'s threshold.
As shown in Figure \ref{fig:model}, some users with consistent (narrow) interests (\ie, user $A$) would benefit from a reasonably low threshold since a few items can already represent their preferences and more items can introduce more noises. Meanwhile, some users with diverse (wide) interests (\ie, user $B$) would require the threshold to be relatively high to model a broader range of preferences and satisfy the need for exploration.
To find a personalized splitting point $\kappa_{i}$ for each user according to the learned matrix, the model requires an automatic splitting mechanism. Inspired by the threshold selection mechanism used in CART \cite{angelopoulos2005exploiting}, we efficiently learn the personalized thresholds $\kappa_{i}$ for different users by minimizing the following sum of square errors:
\begin{equation}
\kappa_{i} = \arg\min_{\eta} \left[ \sum_{j=1}^{\eta}\left(\rho_{ij}-c^{(1)}_{\eta}\right)^{2}+\sum_{j=\eta+1}^{N}\left(\rho_{ij}-c^{(2)}_{\eta}\right)^{2}\right],
\label{eq:kappa}
\end{equation}
where $\rho_{ij}$ is the $j$-th value of the sorted $\rho_i$ and the index $\eta \in \{1,2,\cdots,N-1\}$. 
$c^{(1)}_{\eta}$ represents the mean square error of the top $\eta$ items while $c^{(2)}_{\eta}$ represents one of the rest $N-\eta$ items. We have
\begin{equation}
\begin{aligned}
c^{(1)}_{\eta}=\frac{1}{\eta}\sum_{j=1}^{\eta}\rho_{ij}, \quad c^{(2)}_{\eta}=\frac{1}{N-\eta}\sum_{j=\eta+1}^{N}\rho_{ij}.
\end{aligned}
\end{equation}
After obtaining the index of splitting points $\kappa$, we have the corresponding threshold values $\rho_{i, \kappa_i}$.
Based on the personalized threshold $\rho_{i, \kappa_i}$, we can relabel by $r_{ij} \in \mathbf{R}$ as follows:
\begin{equation}
r_{ij} = \frac{1}{1+\exp(-\beta (\rho_{ij}-\rho_{i, \kappa_i}))},
\label{eq:relabel}
\end{equation}
where $r_{ij}$ is a monotonically increasing function according to the value of $\rho_{ij}-\rho_{i, \kappa_i}$. $\beta$ is a hyperparameter to control the inclination of the function. 

Finally, to make the relabeling process more flexible, we propose to use a hyperparameter $\lambda$ to control the degree of relabeling and obtain a new interaction matrix as follows:
\begin{equation}
\bm{X} = \lambda \bm{X} + (1-\lambda) \bm{R} \odot \bm{X},
\label{eq:newx}
\end{equation}
where $\odot$ corresponds to the element-wise product. Considering that the noises in sparse positive interactions would induce worse effects in recommendation \cite{wang2020denoising}, in this work, we only target the part of the original dataset where $X_{ij} = 1$. In this way, the newly generated $\bm{X}$ can keep users' intrinsic preferences and reduce the noise ones. We summarize the learning procedure of our proposed ProRec in Algorithm \ref{algorithm}, which is proved to guarantee the local convergence for denoised recommendation. 



\begin{prop}
Algorithm 1 monotonically decreases the global matching cost between user embedding space $\mathcal{D}^u$ and item embedding space $\mathcal{D}^v$ until local convergence.
\end{prop}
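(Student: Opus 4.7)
The plan is to view Algorithm~\ref{algorithm} as a block-coordinate descent on a single joint objective and argue that each block update yields a non-increase, so that the resulting sequence of values is monotone and, being bounded below, must converge. I would take the ``global matching cost'' to mean the regularized transport cost
\[
\mathcal{J}(\bm{U},\bm{V},\bm{\pi},\bm{X}) \;=\; \mathcal{L}(\bm{U},\bm{V};\bm{X}) \;+\; \sum_{i,j} \mathcal{M}_{ij}(\bm{U},\bm{V})\,\pi_{ij} \;+\; \gamma H(\bm{\pi}),
\]
where $\mathcal{L}$ is the recommendation loss from Eq.~\ref{eq:nondeep} or Eq.~\ref{eq:deep} and $\mathcal{M}_{ij}$ is read off from $(\bm{U},\bm{V})$ exactly as in the ProRec section.

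I would then walk through the three blocks of each outer iteration and verify that none of them can increase $\mathcal{J}$. For the embedding block, with $\bm{\pi}$ and $\bm{X}$ fixed, the ALS updates (non-deep case) or the gradient steps (deep case) perform standard monotone descent on $\mathcal{L}$, while the $\zeta$-regularizer keeps the iterates bounded, so the induced change in $\sum_{ij}\mathcal{M}_{ij}\pi_{ij}$ along the descent direction is controlled. For the transport block, Proposition~\ref{proposition:rot} supplies a closed-form $\bm{\pi}^*$ that exactly minimizes $\sum_{ij} \mathcal{M}_{ij}\pi_{ij} + \gamma H(\bm{\pi})$ subject to the relaxed marginal constraint, so substituting it in can only decrease $\mathcal{J}$.

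The relabel block is where the real work lies. Because Eq.~\ref{eq:newx} convexly mixes $\bm{X}$ with $\bm{R}\odot\bm{X}$, and $\bm{R}$ damps precisely the entries whose normalized matching score $\rho_{ij}$ exceeds the personalized threshold $\rho_{i,\kappa_i}$, I would argue that $\mathcal{L}(\bm{U},\bm{V};\cdot)$ evaluated at the fixed $(\bm{U},\bm{V})$ weakly decreases on the attenuated positions (their targets move closer to the model's already-low prediction) and is unchanged elsewhere, so $\mathcal{L}(\bm{U},\bm{V};\bm{X}_{\text{new}}) \le \mathcal{L}(\bm{U},\bm{V};\bm{X})$; the matching-cost and entropy terms do not depend on $\bm{X}$ directly. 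Combining the three blocks yields $\mathcal{J}^{(t+1)}\le\mathcal{J}^{(t)}$, and since $\mathcal{L}\ge 0$, $H(\bm{\pi})$ is bounded below on the transport polytope, and $\mathcal{M}$ is bounded under the norm regularizer, $\mathcal{J}$ is bounded below; the monotone convergence theorem then gives local convergence.

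The main obstacle I anticipate is this relabel step, because the personalized threshold $\kappa_i$ is selected via the CART-style rule Eq.~\ref{eq:kappa}, which minimizes an auxiliary sum of squares rather than $\mathcal{J}$ itself, and the sigmoid in Eq.~\ref{eq:relabel} introduces a nonlinearity that must be controlled by $\beta$ and $\lambda$. A clean route, which I would pursue, is to show that for every attenuated entry with $X_{ij}=1$ the residual $|X_{ij}-\hat{X}_{ij}|$ is non-increasing under the relabeling for any $\lambda\in[0,1]$; this together with $R_{ij}\in[0,1]$ would be enough to close the argument. Beyond this, only mild regularity (differentiability of $G_y$, boundedness of the iterates under the $\zeta$-regularizer) is needed, and I would mention but not attempt the stronger conditions required for \emph{global} convergence.
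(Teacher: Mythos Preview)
Your framing introduces a joint objective $\mathcal{J}$ that adds the transport term $\sum_{ij}\mathcal{M}_{ij}\pi_{ij}+\gamma H(\bm{\pi})$ to $\mathcal{L}$, and this is where the argument breaks. In the embedding block, Algorithm~\ref{algorithm} updates $(\bm{U},\bm{V})$ by minimizing $\mathcal{L}$ \emph{alone}; but $\mathcal{M}=-\bm{U}\bm{V}^T$ (or $-G_y(\bm{U},\bm{V})$) is a function of those same variables, so the transport piece of $\mathcal{J}$ moves at the same time. Saying this change is ``controlled'' is not monotonicity: you would need $\mathcal{J}(\bm{U}^{(t+1)},\bm{V}^{(t+1)},\bm{\pi},\bm{X})\le\mathcal{J}(\bm{U}^{(t)},\bm{V}^{(t)},\bm{\pi},\bm{X})$ with $\bm{\pi}$ held fixed, and nothing in the algorithm enforces that. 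So block-coordinate descent on $\mathcal{J}$ is not actually what Algorithm~\ref{algorithm} does, and the first block of your plan does not go through.

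The paper avoids this by tracking a smaller quantity: it takes the ``global matching cost'' to be just $\mathcal{L}(\bm{X},\bm{U},\bm{V})=\|\bm{X}-\bm{U}\bm{V}^T\|_F^2+\zeta(\|\bm{U}\|_2^2+\|\bm{V}\|_2^2)$, now viewed as a function of $\bm{X}$ as well, and proves the two-step alternation
\[
\mathcal{L}(\bm{X}^{(t)},\bm{U}^{(t+1)},\bm{V}^{(t+1)}) \le \mathcal{L}(\bm{X}^{(t)},\bm{U}^{(t)},\bm{V}^{(t)}),\qquad
\mathcal{L}(\bm{X}^{(t+1)},\bm{U}^{(t+1)},\bm{V}^{(t+1)}) \le \mathcal{L}(\bm{X}^{(t)},\bm{U}^{(t+1)},\bm{V}^{(t+1)}).
\]
The OT machinery ($\bm{\pi}^*$, $\bm{\kappa}$, $\bm{R}$) is treated purely as the mechanism that manufactures $\bm{X}^{(t+1)}$ and never enters the objective, so the first inequality is immediate. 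For the second, your instinct is exactly the paper's: for a noisy entry the updated $\mathcal{M}_{ij}$ is larger, hence by the closed form $\pi_{ij}^*\propto e^{-\mathcal{M}_{ij}/\gamma}$ it ranks lower, hence $r_{ij}$ shrinks, and the paper concludes entrywise that $(\lambda x_{ij}+(1-\lambda)r_{ij}x_{ij}+\mathcal{M}_{ij})^2\le(x_{ij}+\mathcal{M}_{ij})^2$. So your relabel-block reasoning is aligned with the paper; the fix is simply to drop the transport term from the objective you are tracking.
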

\begin{proof} 
In order to prove the monotonically decreasing property of the loss function in the iterative process, we only need to prove the following two inequations:

\begin{equation}
\mathcal{L}(\bm{X}^{(t)},\bm{U}^{(t+1)},\bm{V}^{(t+1)}) \leq \mathcal{L}(\bm{X}^{(t)},\bm{U}^{(t)},\bm{V}^{(t)}),
\label{eq:fixX}
\end{equation}

\begin{equation}
\mathcal{L}(\bm{X}^{(t+1)},\bm{U}^{(t+1)},\bm{V}^{(t+1)}) \leq \mathcal{L}(\bm{X}^{(t)},\bm{U}^{(t+1)},\bm{V}^{(t+1)}),
\label{eq:fixUV}
\end{equation}

where:
\begin{equation}
\mathcal{L}(\bm{X},\bm{U},\bm{V}) = ||\bm{X} - \bm{U}\bm{V}^{T}||_F^2 + \zeta (||\bm{U}||_2^2 + ||\bm{V}||_2^2)
\end{equation}

It is obvious that Eq.~\ref{eq:fixX} holds since we just simply choose $\bm{U}$ and $\bm{V}$ to minimize it in each iteration. Now we prove Eq.~\ref{eq:fixUV} holds:

At iteration $t$, according to the update rules Eq.~\ref{eq:newx} we have:

\begin{equation}
\bm{X}^{(t+1)}=\lambda\bm{X}^{(t)}+(1-\lambda)\bm{R}^{(t)}\odot\bm{X}^{(t)}.
\end{equation}

Since $\bm{U}$ and $\bm{V}$ are fixed, the regularization term in the loss function will not change when we just update the interaction matrix. Thus, proving Eq.~\ref{eq:fixUV} is equivalent to proving the following inequation:

\begin{equation}
\begin{aligned}
&||\lambda\bm{X}^{(t+1)}+(1-\lambda)\bm{R}^{(t)}\odot\bm{X}^{(t)} + \bm{\mathcal{M}}^{(t)}||_F^2 \\
\leq &||\bm{X}^{(t)} + \bm{\mathcal{M}}^{(t+1)}||_F^2.
\end{aligned}
\end{equation}

For every user-item pair $(i,j)$, $i\in\{1,2,...,M\},j\in\{1,2,...,N\}$, if $x_{ij}$ is a noise sample (User's wrong click), when we update $\bm{U},\bm{V}$ to minimize the loss, $\mathcal{M}_{ij}$ increases. According to Eq.~\ref{eq:rot}, the elements in the matching matrix can be written as:

\begin{equation}
\pi^{*}_{ij}=e^{-\frac{\mathcal{M}_{ij}}{\gamma}}\frac{ \sum_{k=1}^M \pi_{kj}}{\sum_{k=1}^M e^{-\frac{\mathcal{M}_{kj}}{\gamma}}}
\end{equation}

We can see that when $\mathcal{M}_{ij}$ increases, the value of $\pi^{*}_{ij}$ will decrease, which means that the preference ranking of item $j$ for user $i$ will drop. According to the relabel process Eq.~\ref{eq:relabel}, a lower rank in the item list means a lower value of $r_{ij}$. In other words, the following equation holds for all user-item pairs:

\begin{equation}
(\lambda x_{ij}^{(t)}+(1-\lambda) r_{ij}^{(t)} x_{ij}^{(t)}+\mathcal{M}_{ij}^{(t+1)})^2 \leq (x_{ij}^{(t)}+\mathcal{M}_{ij}^{(t+1)})^2,
\end{equation}
where $0\leq\lambda\leq 1$ and the equal sign holds when $\lambda=1$. From the definition of the F-norm, we can derive that:

\begin{equation}
\begin{aligned}
||\lambda\bm{X}^{(t)}+&(1-\lambda)\bm{R}^{(t)}\odot\bm{X}^{(t)}+\bm{\mathcal{M}}^{(t+1)}||_F^2\\
& \leq ||\bm{X}^{(t)}+\bm{\mathcal{M}}^{(t+1)}||_F^2
\end{aligned}
\end{equation}

The equation means that compare with matrix $\bm{X}^{(t)}$, $\bm{R}^{(t)}\odot\bm{X}^{(t)}$ is closer to $\bm{\mathcal{M}}^{(t+1)}$ in the matrix space. Therefore, Eq.~\ref{eq:fixUV} holds for each iteration. The proof above is similar when using cross-entropy as loss function.

\end{proof}
\section{Experiments}

In this section, we first conduct controlled experiments with synthetic noises on one dataset, so as to investigate ProRec's performance to different levels of noisy interactions. Then, we evaluate ProRec's performance on three original real-world datasets of recommendation with implicit feedback.
To evaluate our proposed ProRec framework, we mainly focus on the following three research questions.
\begin{itemize}
  \item\textbf{RQ1: }Can the proposed ProRec distinguish noise?
  \item\textbf{RQ2: }How does the proposed ProRec perform in comparison to state-of-the-art recommendation methods?
  \item\textbf{RQ3: }What are the effects of different model components and different hyperparameters?
  \item\textbf{RQ4: }How does ProRec improve recommendation by distinguishing intrinsic and noisy interactions? 
\end{itemize}
\begin{table*}
  \caption{Statistics of the datasets used in our experiments.}
  \normalsize
  \centering
  \begin{tabular}{c c cccc}
      \toprule
      Category & Dataset & \# \ User & \# \ Item & \# \ Interaction & Sparsity \\
      \hline
      Controlled & ML-100k & 942 & 1,447 & 55,375 & 0.9594 \\
      \hline
      \multirow{3}{*}{Original} & AMusic & 1,776 & 12,929 & 46,087 & 0.9980 \\

      & AToy & 3,137 & 33,953 & 84,642 & 0.9992 \\

      & Yahoo & 1,948,882 & 46,110 & 48,817,561 & 0.9995 \\
      \bottomrule
  \end{tabular}
\label{tab:datasets-stat}
\end{table*}

\begin{figure*}
\centering
  \subfloat[HR/Recall, controlled dataset]{
    \includegraphics[width = 0.35\linewidth]{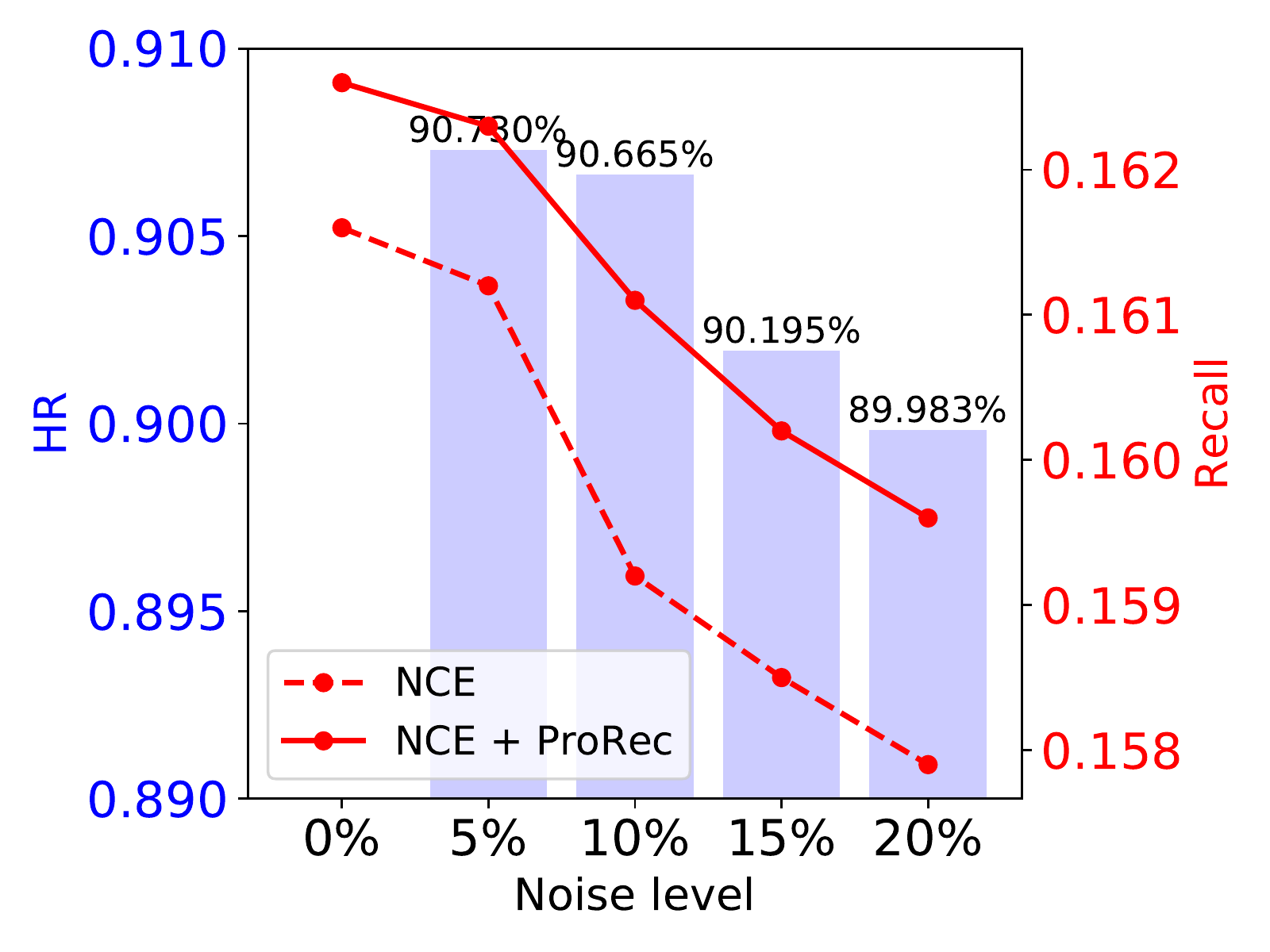}
    \label{fig:syn}
    }
  \hfill
  \subfloat[Varying $\lambda$ on AMusic]{
    \includegraphics[width = 0.3\linewidth]{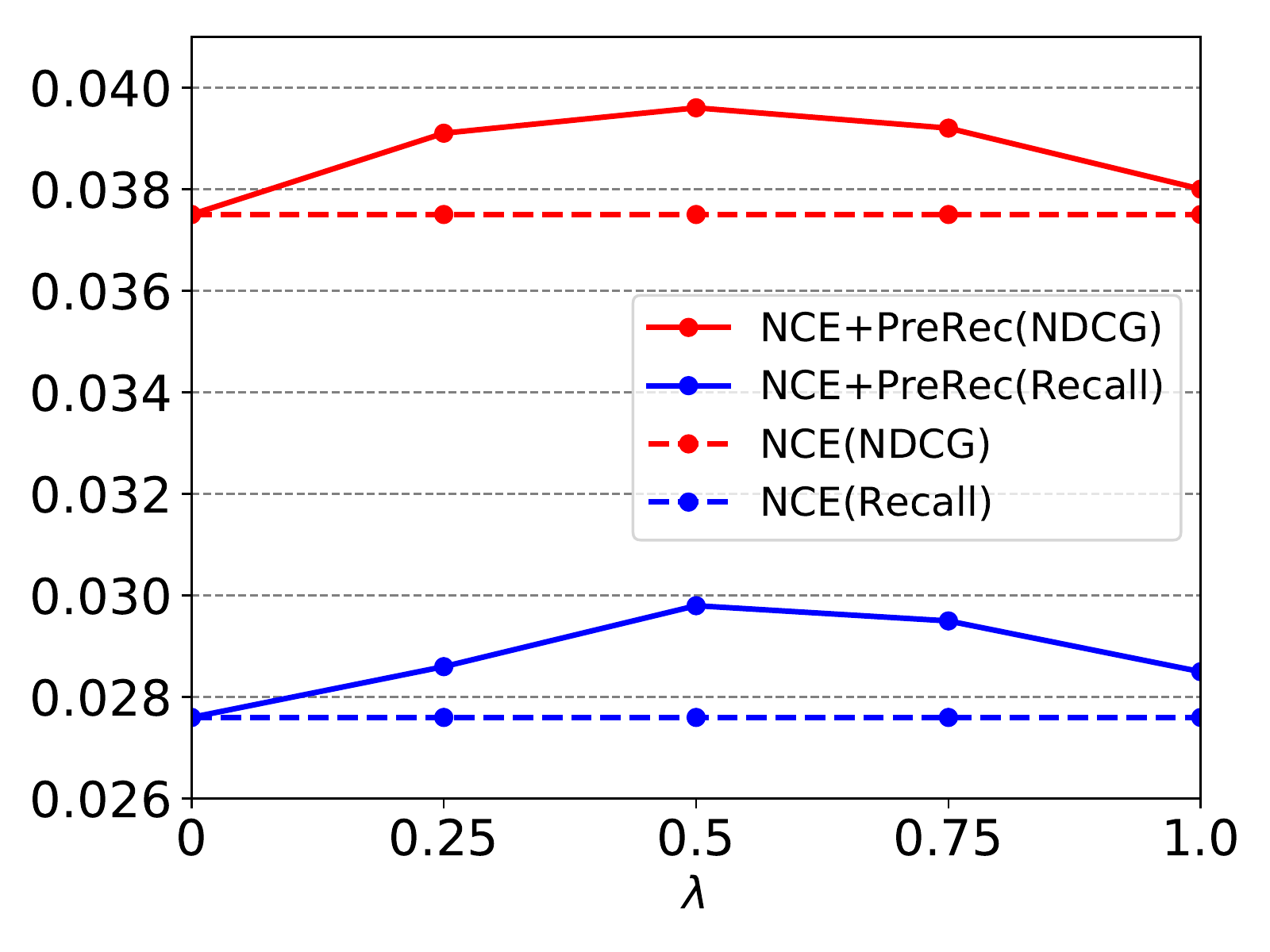}
    \label{fig:lambda}
  }
  \hfill
  \subfloat[Varying $\gamma$ on AMusic]{
    \includegraphics[width =0.3\linewidth]{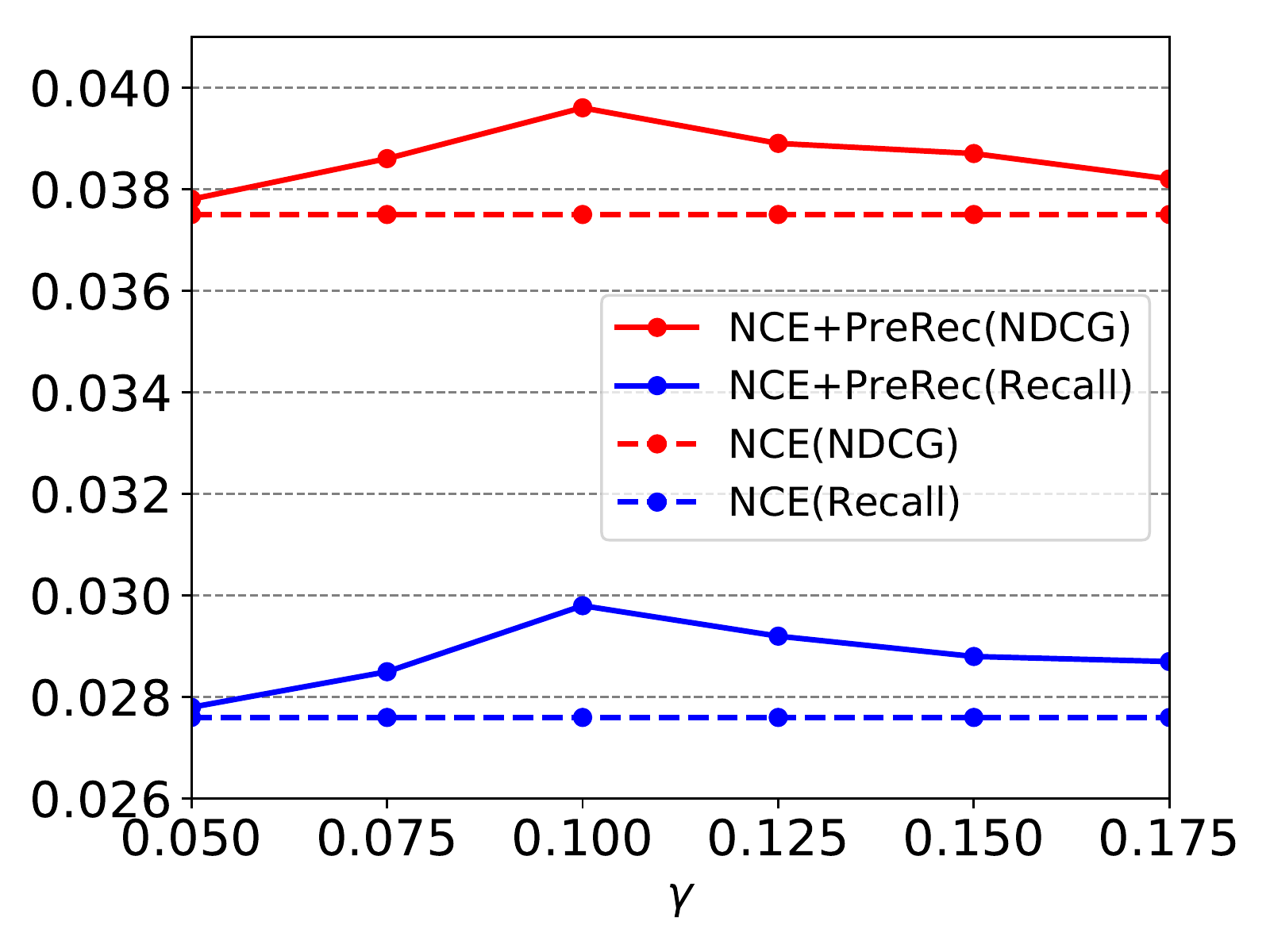}
    \label{fig:gamma}
  }
  \caption{(a) Testing HR/Recall with different noise level on the noise-controlled dataset. (b)-(c) Hyperparameter effect on the proposed ProRec.}
\end{figure*}

\subsection{Experimental Settings}
\subsubsection{Dataset and evaluation protocols. }We use ML-100k for the noise-controlled experiments and do a 4:1 random splitting for train/test data following~\cite{ding2020simplify}. To simulate the noise, we randomly select 5\%/10\%/15\%/20\% of ground-truth records and flip labels in the train set of ML-100k. The selected records can be regarded as noises during training. 
As for real data experiments, we use Amazon music (AMusic)\textsuperscript{\ref{data}}, Amazon toys (AToy)\footnote{https://github.com/familyld/DeepCF/tree/master/Data\label{data}}, and Yahoo\footnote{https://webscope.sandbox.yahoo.com/catalog.php?datatype=r}.
These datasets have been widely adopted in previous literature~\cite{deng2019deepcf,wang2020denoising,wu2019noise}, and their statistics are summarized in Table~\ref{tab:datasets-stat}.
We do a 5:2:3 splitting for them following~\cite{wu2019noise}. 
We evaluate the ability of distinguishing noise by Hit Ratio (HR), and recommendation performances by normalized discounted cumulative (NDCG), mean average precision (MAP), and Recall.

\subsubsection{Methods for comparison }
We first compare two types of algorithms for recommendation.

The first type is non-deep algorithms, which include:
\begin{itemize}
    \item SVD~\cite{cremonesi2010performance}: A similarity-based recommendation method that constructs a similarity matrix through SVD decomposition of implicit matrix $\bm{X}$; 
    \item NCE~\cite{wu2019noise}: Noise contrastive estimation item embeddings combined with a personalized user model based on PLRec~\cite{sedhain2016practical}. 
\end{itemize}

The second type is deep learning methods, including:
\begin{itemize}
    \item CDAE~\cite{hu2008collaborative}: Collaborative Denoising Autoencoder is a deep learning method specifically optimized for implicit feedback recommendation tasks; 
    \item WRMF~\cite{hu2008collaborative}: Weighted Regularized Matrix Factorization, which is a deep MF framework; VAE-CF~\cite{liang2018variational}: Variational Autoencoder for Collaborative Filtering, which is one of the state-of-the-art deep learning based recommendation algorithm~\cite{wu2019noise}. 
\end{itemize}

Besides the above methods, R-CE and T-CE are the only existing frameworks for denoised recommendation proposed by~\cite{wang2020denoising}, which are designed for deep learning methods only:
\begin{itemize}
\item CDAE + T-CE~\cite{wang2020denoising}: Truncated Loss that discards the large-loss samples with a dynamic threshold in each iteration based on CDAE. According to the original paper~\cite{wang2020denoising}, CDAE + T-CE shows the best performance compared with the others based on GMF and NeuMF. Therefore, we add CDAE + T-CE as baselines to compare the ability of denoising with the proposed ProRec.
\item CDAE+R-CE \cite{wang2020denoising}: Reweighted Loss that lowers the weight of large-loss samples based on CDAE.
\end{itemize}

For all the algorithms above (except for CDAE + T-CE and CDAE + R-CE), we add the proposed ProRec on top of them, which allows both non-deep and deep RS models to discriminate intrinsic and noisy interactions in an unsupervised fashion. Specifically, we have SVD + ProRec, NCE + ProRec, CDAE + ProRec, WRMF + ProRec, and VAE-CF + ProRec, respectively. 

\subsubsection{Implementation details}
We implement the proposed ProRec with Pytorch on 8 NVIDIA 2080Ti GPUs. The code will be released publicly after acceptance. Implementations of the compared baselines are either from open-source projects or the original authors (SVD/NCE/CDAE//WRMF/VAE-CF\footnote{https://github.com/wuga214/NCE\_Projected\_LRec}, T-CE and R-CE\footnote{https://github.com/WenjieWWJ/DenoisingRec}). 
We optimize ProRec with standard Adam. We tune all hyperparameters through grid search.
In particular, learning rate in 
$\{0.0005, 0.001, 0.005, 0.01\}$, 
$\lambda$ in $\{0.25, 0.5, 0.75, 1.0\}$,
$\gamma$ in $\{0.05, 0.075, 0.1, 0.125, 0.15, 0.175\}$, 
$\beta$ in $\{1, 5, 10, 20, 50\}$,
$\zeta$ in $\{0.0005, 0.001, 0.005, 0.01\}$,
and the batch size in $\{100, 250, 500, 1000\}$ for different datasets.
We also carefully tuned the hyperparameters of all baselines through cross-validation as suggested in the original papers to achieve their best performances.


\subsection{Controlled Experiments with Synthetic Noises \textbf{(RQ1)}}
Before looking at the performance of recommendation, we first inspect ProRec's ability to distinguish noises.
Since we know the index of the randomly added noises in the dataset, we evaluate the level of distinguished noises by Hit Ratio (HR) of the real added noises in Figure \ref{fig:syn}. 
Specifically, we observe a consistent high HR of NCE + ProRec around 90\% when the level of noises increases.
Furthermore, we test Recall of NCE and NCE + ProRec under different levels of noises.
It can be clearly observed that NCE + ProRec always performs better than NCE, as the former can accurately relabel some noises to eliminate the effect of noises. For example, when the noise level is at 20\%, NCE + ProRec significantly improves NCE from 0.1579 to 0.1596.

\begin{table*}[]
    \caption{Comparison between the original baselines and the ones plus ProRec on three benchmark datasets. Best performances are in boldface. Statistically tested significant improvements with $p$-value < 0.01 brought by ProRec compared with the original methods are indicated with *.}
    \normalsize
    \centering
    \begin{tabular}{l  lll | lll | lll}
   \toprule
    model & N@5 & M@5 & R@5 & 
     N@5 & M@5 & R@5 & N@5 & M@5 & R@5\\
    & \multicolumn{3}{c|}{AMusic} & \multicolumn{3}{c|}{AToy} & \multicolumn{3}{c}{Yahoo}\\
    \hline
    
    
    SVD~\cite{cremonesi2010performance} & 0.0363 & 0.0667 & 0.0264 & 0.0116 & 0.0212& 0.0084&
     0.1614& 0.2316& 0.1430 \\
    
    NCE~\cite{wu2019noise} & 0.0375 & 0.0693 & 0.0276 &
    0.0143 & 0.0262 & 0.0104  &
    0.2498 & 0.3440 & 0.2249 \\ 
    
    CDAE~\cite{hu2008collaborative} & 0.0074 & 0.0124 & 0.0060 &
    0.0046 & 0.0080 & 0.0033 &
    0.0716 & 0.1022 & 0.0652\\
    
    WRMF~\cite{hu2008collaborative} & 0.0236 & 0.0435 & 0.0171 &
    0.0086 & 0.0157 & 0.0062 &
    0.2503 & 0.3244 & 0.2321\\
    
    VAE-CF~\cite{liang2018variational} &  0.0146 & 0.0258 & 0.0115 &
     0.0117& 0.0212 & 0.0084 & 
    0.2622 & 0.3423 & 0.2380\\

    \hline
    
    CDAE + T-CE~\cite{wang2020denoising} & 0.0070 & 0.0113 & 0.0060 & 0.0048 & 0.0078 & 0.0036 & 0.0696
    & 0.1030 & 0.0607\\
    
    CDAE + R-CE~\cite{wang2020denoising} & 0.0078 & 0.0142 & 0.0057 & 0.0046 & 0.0077 & 0.0034 & 0.0716 & 0.1020 & 0.6490\\ 
    
    \hline
    SVD + ProRec  &  0.0374$^*$ & 0.0677 & 0.0275$^*$  & 
    0.0123$^*$ & 0.0220$^*$ & 0.0093$^*$ &
     0.1987$^*$ & 0.2820$^*$ & 0.1763$^*$\\
    
    NCE + ProRec &  \textbf{0.0396}$^*$ & \textbf{0.0723}$^*$ & \textbf{0.0298}$^*$ &
    \textbf{0.0149} & \textbf{0.0267}$^*$ & \textbf{0.0108} &
    0.2619$^*$ & 0.3562$^*$ & 0.2364$^*$\\
    
    CDAE + ProRec & 0.0080$^*$ & 0.0148$^*$ & 0.0065$^*$ &
    0.0049 & 0.0084 & 0.0038$^*$ &
    0.0718 & 0.1028 & 0.0660$^*$ \\
    
    WRMF + ProRec & 0.0241$^*$ & 0.0442$^*$ & 0.0179$^*$ &
    0.0096$^*$ & 0.0165$^*$ & 0.0077$^*$ &
    0.2699$^*$ & 0.3358$^*$ & 0.2403$^*$ \\ 
    
    VAE-CF + ProRec & 
    0.0162$^*$& 0.0266$^*$& 0.0120&
    0.0129$^*$ & 0.0231$^*$ & 0.0102$^*$ &
    \textbf{0.2743}$^*$ & \textbf{0.3535}$^*$ & \textbf{0.2472}$^*$\\
    
    
    
    
    
    
    
    
    
    
    
    
     \bottomrule
    \end{tabular}
    \label{tab:overall}
\end{table*}

\subsection{Overall Performance Comparison \textbf{(RQ2)}}
\label{exp:overall}
We compare the recommendation results of the proposed ProRec to those of the baseline models.
Table \ref{tab:overall} shows the NDCG, MAP, and Recall scores on three datasets. We have the following observations.

In three different domains, by integrating ProRec with both non-deep and deep methods, we can consistently improve strong baseline models and achieve the start-of-the-art performance. 
Since deep learning based models are more suitable than non-deep ones in large-scale datasets and vice versa, the proposed NCE + ProRec on top of the non-deep method outperforms all baselines on AMusic and AToy, ranging from 4.33\% (achieved on AMusic on MAP@5) to 7.70\% (achieved on AMusic on Recall@5) and from 1.90\% (achieved on AToy on MAP@5) to 4.20\% (achieved on AToy on NDCG@5), while VAE-CF + ProRec on top of the deep learning method achieve the best performance on the Yahoo dataset, ranging from 3.27\% (achieved on Yahoo on MAP@5) to 4.61\% (achieved on Yahoo on NDCG@5).
All of these show that ProRec is capable of effective recommendation on different datasets.

One step further, we observe that the methods based on the proposed ProRec framework all outperform the original models, which indicates the advantage of the denoising process. 
Compared with existing denoising frameworks (\ie, T-CE and R-CE), ProRec can not only be flexibly integrated with both deep and non-deep RS methods, but also gain more improvements.

Moreover, on top of existing RS methods, our proposed ProRec in Eq.~\ref{eq:sinkhorn} does not introduce significant computations.
In the third and sixth columns in Table \ref{tab:ablation}, we compare the model's training time under the same experimental setting (\eg, embedding dimension).
As can be observed, the training time of the proposed NCE + ProRec is within the same scale as NCE on different scales of datasets.

Note that the improvements bringing by the denoising process on the denser datasets (\eg, ML-100k in Figure~\ref{fig:syn} when noise level is 0\%) is less than those on the sparse datasets (\eg, Yahoo in Table~\ref{tab:overall}). 
Although we can explicitly eliminate some effect of noisy interactions (as shown in Figure~\ref{fig:syn}), they only have a minor impact on the learned embeddings when interactions is enough in the observed data, which is consistent with the results in recent studies~\cite{pal2020non}.

\subsection{Model Ablation and Hyperparameter Study \textbf{(RQ3)}}
\label{sec:rq2}
In this section, three groups of ablation studies are conducted to evaluate the impacts of our proposed many-to-many matching, relaxed regularization, and personalized thresholding mechanisms, as well as the effects of hyperparameters.

\begin{table*}[]
    \normalsize
    \centering
    \caption{Impact of different model components of ProRec on AMusic and Yahoo datasets. }
    \begin{tabular}{lcccccc}
    \toprule
    Dataset & \multicolumn{3}{c}{AMusic} & \multicolumn{3}{c}{Yahoo} \\
    Metric& NDCG@5 & Recall@5 & Time & NDCG@5 & Recall@5 & Time \\
    \hline
    NCE &  0.0375 & 0.0276 & 64s & 0.2498 & 0.2249 & 361s\\
    NCE + EMD & 0.0360 & 0.0259 & 98s & 0.2169 & 0.1933 & 723s\\
    NCE + Sinkhorn & 0.0384 & 0.0281 & 132s & 0.2579 & 0.2341 & 1845s\\
    NCE + ROT & 0.0383 & 0.0281 & 76s & 0.2577 & 0.2340 & 683s\\
    NCE + ProRec & \textbf{0.0396} & \textbf{0.0298} & 82s & 
    \textbf{0.2619} & \textbf{0.2364} & 694s\\
    \bottomrule
    \end{tabular}
    \label{tab:ablation}
\end{table*}
\noindent\textbf{The effect of continuous many-to-many matching.}
Compared with the many-to-many matching of NCE + Sinkhorn, it is ineffective to directly adopt one-to-one matching (\ie, NCE + EMD) due to the removal of many matched interactions. 
For example, NCE + Sinkhorn significantly improves NCE from 0.0276 to 0.0281 (on Recall@5 on AMusic) while both metrics of NCE + EMD are lower than those of the original NCE.

\noindent\textbf{The efficiency of relaxed regularization.} By balancing the performance and the runtime via relaxing the optimized constraints, we can clearly observe that  NCE + ROT can reduce around half of the runtime while keeping almost the same NDCG and Recall in AMusic and Yahoo datasets.

\noindent\textbf{The effect of personalized thresholding mechanism. }We first use one global threshold for ROT to study the effectiveness of the personalized thresholding mechanism.
In the experiment, we search the hyperparameter of global thresholds $\sigma$ in $\{5, 10, 15, 20, 25\}$ and find that NCE + ROT achieve the best performance when $\sigma=10$.
Compared with  NCE + ROT, NCE + ProRec can individually discriminate intrinsic preferences and noises without tuning the hyperparameter of threshold, and then improve the performance of the partial OT framework.
For example, the performance of NCE + ProRec achieves 0.0298, which outperforms 0.0281 of NCE + ROT on Recall@5 on AMusic dataset.

\noindent\textbf{The effect of hyperparameters. }Our proposed ProRec framework introduces four hyperparameters: $\gamma$, $\beta$, $\zeta$ and $\lambda$. Since different $\beta$ and $\zeta$ have little effect on performances, we empirically set $\beta$ to 20 and $\zeta$ to 0.001. 
$\lambda$ controls the degree of the relabeling and $\gamma$ controls the sparsity of the matching matrix. 
Take NCE + ProRec for example (shown in Figure \ref{fig:lambda}-\ref{fig:gamma}), the optimal $\lambda$ and $\gamma$ on AMusic are found to be 0.5 and 0.1, respectively. In practice, we can initialize $\lambda$ to 0.5 for denser datasets and 0.25 for the sparser datasets with slight tuning. In general, we can set $\gamma$ to 0.1 .

\subsection{Denoising Case Study \textbf{(RQ4)}}
\begin{figure*}
\centering
\includegraphics[width=0.8\linewidth]{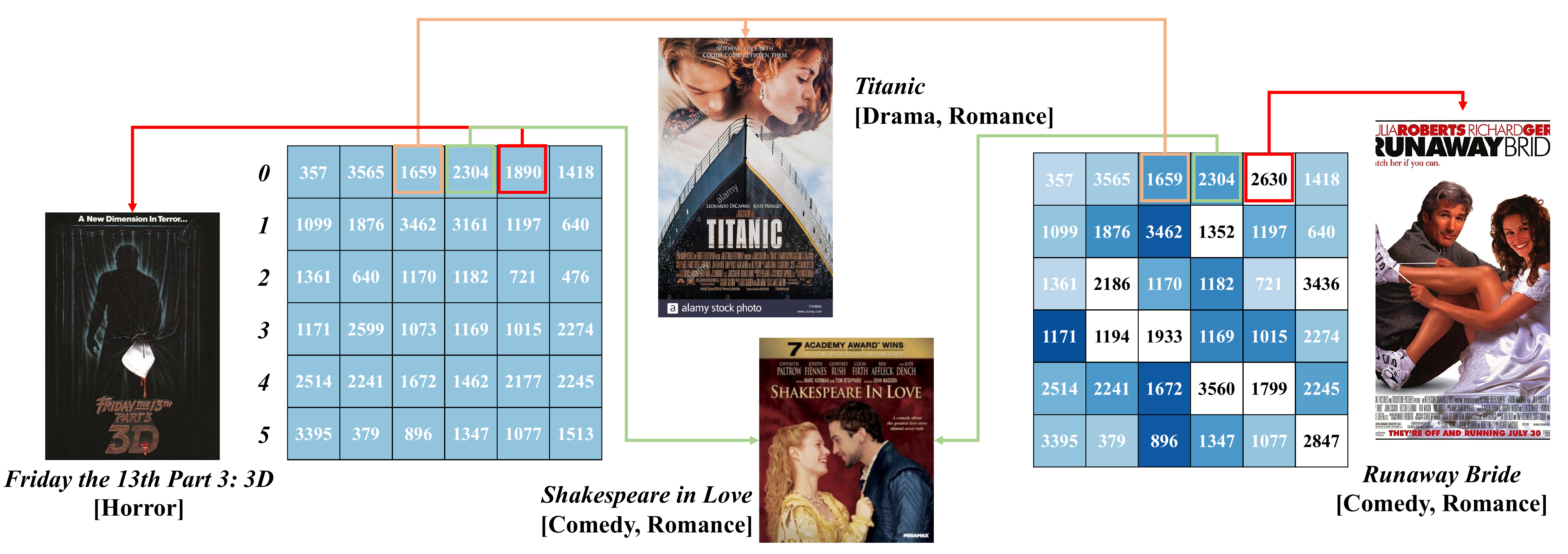}
\caption{The denoising process of ProRec.}
\label{fig:case}
\end{figure*}
To demonstrate the advantages of the denoising process,
we visualize the interaction matrix given by ML-100k on the left and the learned matrix of the proposed NCE + ProRec on the right (as shown in Figure \ref{fig:case}). 
The numbers in the boxes are actual movie IDs.
The colors in the original interaction matrix (left) are the same, indicating uniform weights, whereas those in the relabeled matrix (right) are different. The different depths of blue represent the matching cost (the deeper, the lower) and white boxes denote the intrinsically preferred items.
As we can see from the different colors, our many-to-many matching matrix effectively discriminates intrinsic and noisy interactions.
Moreover, we can also observe the personalized thresholding mechanism to work as different numbers of items being replaced for different users.
Furthermore, we show several movies relevant to user 0. Since most interacted movies of user 0 are from the \textsf{romance} category, such as \textit{Titanic} and \textit{Shakespeare in Love}, the \textsf{horror} movie is abnormal and is unlikely to reflect her/his intrinsic preference. NCE + ProRec automatically learns to highlight the two \textsf{romantic} movies.
After down-weighing \textit{Friday the 13th Part 3: 3D}, the scoring of user's preference changes, and thus the ranking of another \textsf{romantic} movie (\textit{Runaway Bride}) rise to show up in the top $K=6$ candidate list for user 0.

\section{Conclusion}
In this paper, we propose a novel ProRec framework for recommendation, which can effectively denoise the user-item interactions for both non-deep and deep learning RS methods with implicit feedback, so as to adaptively improve the recommendation accuracy.
We demonstrate the superior performance of ProRec in recommendation through extensive experiments and showcase its effectiveness in denoising user-item interactions through insightful case studies.

\bibliographystyle{named}
\bibliography{ProRec}

\begin{thebibliography}{10}\itemsep=-1pt

\bibitem{angelopoulos2005exploiting}
Nicos Angelopoulos and James Cussens.
\newblock Exploiting informative priors for bayesian classification and
  regression trees.
\newblock In Leslie~Pack Kaelbling and Alessandro Saffiotti, editors, {\em
  IJCAI-05, Proceedings of the Nineteenth International Joint Conference on
  Artificial Intelligence, Edinburgh, Scotland, UK, July 30 - August 5, 2005},
  pages 641--646. Professional Book Center, 2005.

\bibitem{boyd2011distributed}
Stephen~P. Boyd, Neal Parikh, Eric Chu, Borja Peleato, and Jonathan Eckstein.
\newblock Distributed optimization and statistical learning via the alternating
  direction method of multipliers.
\newblock {\em Found. Trends Mach. Learn.}, 3(1):1--122, 2011.

\bibitem{chen2018adversarial}
Liqun Chen, Shuyang Dai, Chenyang Tao, Haichao Zhang, Zhe Gan, Dinghan Shen,
  Yizhe Zhang, Guoyin Wang, Ruiyi Zhang, and Lawrence Carin.
\newblock Adversarial text generation via feature-mover's distance.
\newblock In Samy Bengio, Hanna~M. Wallach, Hugo Larochelle, Kristen Grauman,
  Nicol{\`{o}} Cesa{-}Bianchi, and Roman Garnett, editors, {\em Advances in
  Neural Information Processing Systems}, pages 4671--4682, 2018.

\bibitem{DBLP:conf/iclr/ChenZZTGZLSCC19}
Liqun Chen, Yizhe Zhang, Ruiyi Zhang, Chenyang Tao, Zhe Gan, Haichao Zhang, Bai
  Li, Dinghan Shen, Changyou Chen, and Lawrence Carin.
\newblock Improving sequence-to-sequence learning via optimal transport.
\newblock In {\em 7th International Conference on Learning Representations,
  {ICLR} 2019, New Orleans, LA, USA, May 6-9, 2019}. OpenReview.net, 2019.

\bibitem{DBLP:journals/pami/CourtyFTR17}
Nicolas Courty, R{\'{e}}mi Flamary, Devis Tuia, and Alain Rakotomamonjy.
\newblock Optimal transport for domain adaptation.
\newblock {\em {IEEE} Trans. Pattern Anal. Mach. Intell.}, 39(9):1853--1865,
  2017.

\bibitem{cremonesi2010performance}
Paolo Cremonesi, Yehuda Koren, and Roberto Turrin.
\newblock Performance of recommender algorithms on top-n recommendation tasks.
\newblock In Xavier Amatriain, Marc Torrens, Paul Resnick, and Markus Zanker,
  editors, {\em Proceedings of the 2010 {ACM} Conference on Recommender
  Systems, RecSys 2010, Barcelona, Spain, September 26-30, 2010}, pages 39--46.
  {ACM}, 2010.

\bibitem{cuturi2013sinkhorn}
Marco Cuturi.
\newblock Sinkhorn distances: Lightspeed computation of optimal transport.
\newblock In Christopher J.~C. Burges, L{\'{e}}on Bottou, Zoubin Ghahramani,
  and Kilian~Q. Weinberger, editors, {\em Advances in Neural Information
  Processing Systems}, pages 2292--2300, 2013.

\bibitem{bhushan2018deepjdot}
Bharath~Bhushan Damodaran, Benjamin Kellenberger, R{\'{e}}mi Flamary, Devis
  Tuia, and Nicolas Courty.
\newblock Deepjdot: Deep joint distribution optimal transport for unsupervised
  domain adaptation.
\newblock In Vittorio Ferrari, Martial Hebert, Cristian Sminchisescu, and Yair
  Weiss, editors, {\em ECCV}, 2018.

\bibitem{deng2019deepcf}
Zhi{-}Hong Deng, Ling Huang, Chang{-}Dong Wang, Jian{-}Huang Lai, and Philip~S.
  Yu.
\newblock Deepcf: {A} unified framework of representation learning and matching
  function learning in recommender system.
\newblock In {\em The Thirty-Third {AAAI} Conference on Artificial
  Intelligence}, pages 61--68, 2019.

\bibitem{ding2020simplify}
Jingtao Ding, Yuhan Quan, Quanming Yao, Yong Li, and Depeng Jin.
\newblock Simplify and robustify negative sampling for implicit collaborative
  filtering.
\newblock In Hugo Larochelle, Marc'Aurelio Ranzato, Raia Hadsell,
  Maria{-}Florina Balcan, and Hsuan{-}Tien Lin, editors, {\em Advances in
  Neural Information Processing Systems}, 2020.

\bibitem{gretton2012optimal}
Arthur Gretton, Bharath~K. Sriperumbudur, Dino Sejdinovic, Heiko Strathmann,
  Sivaraman Balakrishnan, Massimiliano Pontil, and Kenji Fukumizu.
\newblock Optimal kernel choice for large-scale two-sample tests.
\newblock In Peter~L. Bartlett, Fernando C.~N. Pereira, Christopher J.~C.
  Burges, L{\'{e}}on Bottou, and Kilian~Q. Weinberger, editors, {\em Advances
  in Neural Information Processing Systems}, pages 1214--1222, 2012.

\bibitem{hu2008collaborative}
Yifan Hu, Yehuda Koren, and Chris Volinsky.
\newblock Collaborative filtering for implicit feedback datasets.
\newblock In {\em Proceedings of the 8th {IEEE} International Conference on
  Data Mining}, pages 263--272, 2008.

\bibitem{huynh2020otlda}
Viet Huynh, He Zhao, and Dinh Phung.
\newblock {OTLDA:} {A} geometry-aware optimal transport approach for topic
  modeling.
\newblock In Hugo Larochelle, Marc'Aurelio Ranzato, Raia Hadsell,
  Maria{-}Florina Balcan, and Hsuan{-}Tien Lin, editors, {\em Advances in
  Neural Information Processing Systems}, 2020.

\bibitem{jagerman2019model}
Rolf Jagerman, Harrie Oosterhuis, and Maarten de Rijke.
\newblock To model or to intervene: {A} comparison of counterfactual and online
  learning to rank from user interactions.
\newblock In Benjamin Piwowarski, Max Chevalier, {\'{E}}ric Gaussier, Yoelle
  Maarek, Jian{-}Yun Nie, and Falk Scholer, editors, {\em Proceedings of the
  42nd International Conference on Research and Development in Information
  Retrieval}, pages 15--24, 2019.

\bibitem{jiang2018mentornet}
Lu Jiang, Zhengyuan Zhou, Thomas Leung, Li{-}Jia Li, and Li Fei{-}Fei.
\newblock Mentornet: Learning data-driven curriculum for very deep neural
  networks on corrupted labels.
\newblock In Jennifer~G. Dy and Andreas Krause, editors, {\em Proceedings of
  the 35th International Conference on Machine Learning, {ICML} 2018,
  Stockholmsm{\"{a}}ssan, Stockholm, Sweden, July 10-15, 2018}, volume~80 of
  {\em Proceedings of Machine Learning Research}, pages 2309--2318. {PMLR},
  2018.

\bibitem{kim2014modeling}
Youngho Kim, Ahmed~Hassan Awadallah, Ryen~W. White, and Imed Zitouni.
\newblock Modeling dwell time to predict click-level satisfaction.
\newblock In Ben Carterette, Fernando Diaz, Carlos Castillo, and Donald
  Metzler, editors, {\em Seventh {ACM} International Conference on Web Search
  and Data Mining, {WSDM} 2014, New York, NY, USA, February 24-28, 2014}, pages
  193--202. {ACM}, 2014.

\bibitem{kolkin2019style}
Nicholas~I. Kolkin, Jason Salavon, and Gregory Shakhnarovich.
\newblock Style transfer by relaxed optimal transport and self-similarity.
\newblock In {\em {IEEE} Conference on Computer Vision and Pattern Recognition,
  {CVPR} 2019, Long Beach, CA, USA, June 16-20, 2019}, pages 10051--10060.
  Computer Vision Foundation / {IEEE}, 2019.

\bibitem{kusner2015word}
Matt~J. Kusner, Yu Sun, Nicholas~I. Kolkin, and Kilian~Q. Weinberger.
\newblock From word embeddings to document distances.
\newblock In Francis~R. Bach and David~M. Blei, editors, {\em Proceedings of
  the 32nd International Conference on Machine Learning, {ICML} 2015, Lille,
  France, 6-11 July 2015}, volume~37 of {\em {JMLR} Workshop and Conference
  Proceedings}, pages 957--966. JMLR.org, 2015.

\bibitem{li2019learning}
Ruilin Li, Xiaojing Ye, Haomin Zhou, and Hongyuan Zha.
\newblock Learning to match via inverse optimal transport.
\newblock {\em J. Mach. Learn. Res.}, 20:80:1--80:37, 2019.

\bibitem{liang2018variational}
Dawen Liang, Rahul~G. Krishnan, Matthew~D. Hoffman, and Tony Jebara.
\newblock Variational autoencoders for collaborative filtering.
\newblock In Pierre{-}Antoine Champin, Fabien Gandon, Mounia Lalmas, and
  Panagiotis~G. Ipeirotis, editors, {\em Proceedings of the 2018 World Wide Web
  Conference on World Wide Web, {WWW} 2018, Lyon, France, April 23-27, 2018},
  pages 689--698. {ACM}, 2018.

\bibitem{lu2018between}
Hongyu Lu, Min Zhang, and Shaoping Ma.
\newblock Between clicks and satisfaction: Study on multi-phase user
  preferences and satisfaction for online news reading.
\newblock In Kevyn Collins{-}Thompson, Qiaozhu Mei, Brian~D. Davison, Yiqun
  Liu, and Emine Yilmaz, editors, {\em The 41st International Conference on
  Research {\&} Development in Information Retrieval}, pages 435--444, 2018.

\bibitem{lu2019effects}
Hongyu Lu, Min Zhang, Weizhi Ma, Ce Wang, Feng Xia, Yiqun Liu, Leyu Lin, and
  Shaoping Ma.
\newblock Effects of user negative experience in mobile news streaming.
\newblock In Benjamin Piwowarski, Max Chevalier, {\'{E}}ric Gaussier, Yoelle
  Maarek, Jian{-}Yun Nie, and Falk Scholer, editors, {\em Proceedings of the
  42nd International Conference on Research and Development in Information
  Retrieval}, pages 705--714, 2019.

\bibitem{meng2020wasserstein}
Yitong Meng, Xiao Yan, Weiwen Liu, Huanhuan Wu, and James Cheng.
\newblock Wasserstein collaborative filtering for item cold-start
  recommendation.
\newblock In Tsvi Kuflik, Ilaria Torre, Robin Burke, and Cristina Gena,
  editors, {\em Proceedings of the 28th {ACM} Conference on User Modeling,
  Adaptation and Personalization, {UMAP} 2020, Genoa, Italy, July 12-18, 2020},
  pages 318--322. {ACM}, 2020.

\bibitem{pal2020non}
Soumyasundar Pal, Saber Malekmohammadi, Florence Regol, Yingxue Zhang, Yishi
  Xu, and Mark Coates.
\newblock Non parametric graph learning for bayesian graph neural networks.
\newblock In Ryan~P. Adams and Vibhav Gogate, editors, {\em Proceedings of the
  Thirty-Sixth Conference on Uncertainty in Artificial Intelligence},
  Proceedings of Machine Learning Research, 2020.

\bibitem{parikh2014proximal}
Neal Parikh and Stephen Boyd.
\newblock Proximal algorithms.
\newblock {\em Foundations and Trends in optimization}, 2014.

\bibitem{schnabel2016recommendations}
Tobias Schnabel, Adith Swaminathan, Ashudeep Singh, Navin Chandak, and Thorsten
  Joachims.
\newblock Recommendations as treatments: Debiasing learning and evaluation.
\newblock In Maria{-}Florina Balcan and Kilian~Q. Weinberger, editors, {\em
  Proceedings of the 33nd International Conference on Machine Learning, {ICML}
  2016, New York City, NY, USA, June 19-24, 2016}, volume~48 of {\em {JMLR}
  Workshop and Conference Proceedings}, pages 1670--1679. JMLR.org, 2016.

\bibitem{sedhain2016practical}
Suvash Sedhain, Hung~Hai Bui, Jaya Kawale, Nikos Vlassis, Branislav Kveton,
  Aditya~Krishna Menon, Trung Bui, and Scott Sanner.
\newblock Practical linear models for large-scale one-class collaborative
  filtering.
\newblock In Subbarao Kambhampati, editor, {\em Proceedings of the Twenty-Fifth
  International Joint Conference on Artificial Intelligence}, pages 3854--3860,
  2016.

\bibitem{wang2020denoising}
Wenjie Wang, Fuli Feng, Xiangnan He, Liqiang Nie, and Tat{-}Seng Chua.
\newblock Denoising implicit feedback for recommendation.
\newblock In Liane Lewin{-}Eytan, David Carmel, Elad Yom{-}Tov, Eugene
  Agichtein, and Evgeniy Gabrilovich, editors, {\em The Fourteenth {ACM}
  International Conference on Web Search and Data Mining}, pages 373--381,
  2021.

\bibitem{wu2019noise}
Ga Wu, Maksims Volkovs, Chee~Loong Soon, Scott Sanner, and Himanshu Rai.
\newblock Noise contrastive estimation for one-class collaborative filtering.
\newblock In Benjamin Piwowarski, Max Chevalier, {\'{E}}ric Gaussier, Yoelle
  Maarek, Jian{-}Yun Nie, and Falk Scholer, editors, {\em Proceedings of the
  42nd International Conference on Research and Development in Information
  Retrieval}, pages 135--144, 2019.

\bibitem{xu2018social}
Qianqian Xu, Jiechao Xiong, Xiaochun Cao, Qingming Huang, and Yuan Yao.
\newblock From social to individuals: {A} parsimonious path of multi-level
  models for crowdsourced preference aggregation.
\newblock {\em {IEEE} Trans. Pattern Anal. Mach. Intell.}, 41(4):844--856,
  2019.

\bibitem{xujoint}
Renjun Xu, Pelen Liu, Yin Zhang, Fang Cai, Jindong Wang, Shuoying Liang, Heting
  Ying, and Jianwei Yin.
\newblock Joint partial optimal transport for open set domain adaptation.
\newblock In Christian Bessiere, editor, {\em Proceedings of the Twenty-Ninth
  International Joint Conference on Artificial Intelligence, {IJCAI} 2020},
  pages 2540--2546. ijcai.org, 2020.

\bibitem{yang2012exploiting}
Byoungju Yang, Sangkeun Lee, Sungchan Park, and Sang{-}goo Lee.
\newblock Exploiting various implicit feedback for collaborative filtering.
\newblock In Alain Mille, Fabien Gandon, Jacques Misselis, Michael Rabinovich,
  and Steffen Staab, editors, {\em Proceedings of the 21st World Wide Web
  Conference, {WWW} 2012, Lyon, France, April 16-20, 2012 (Companion Volume)},
  pages 639--640. {ACM}, 2012.

\end{thebibliography}

\vspace*{-8ex}
\begin{IEEEbiography}[{\includegraphics[width=0.96in,height=1.1in,clip,keepaspectratio]
{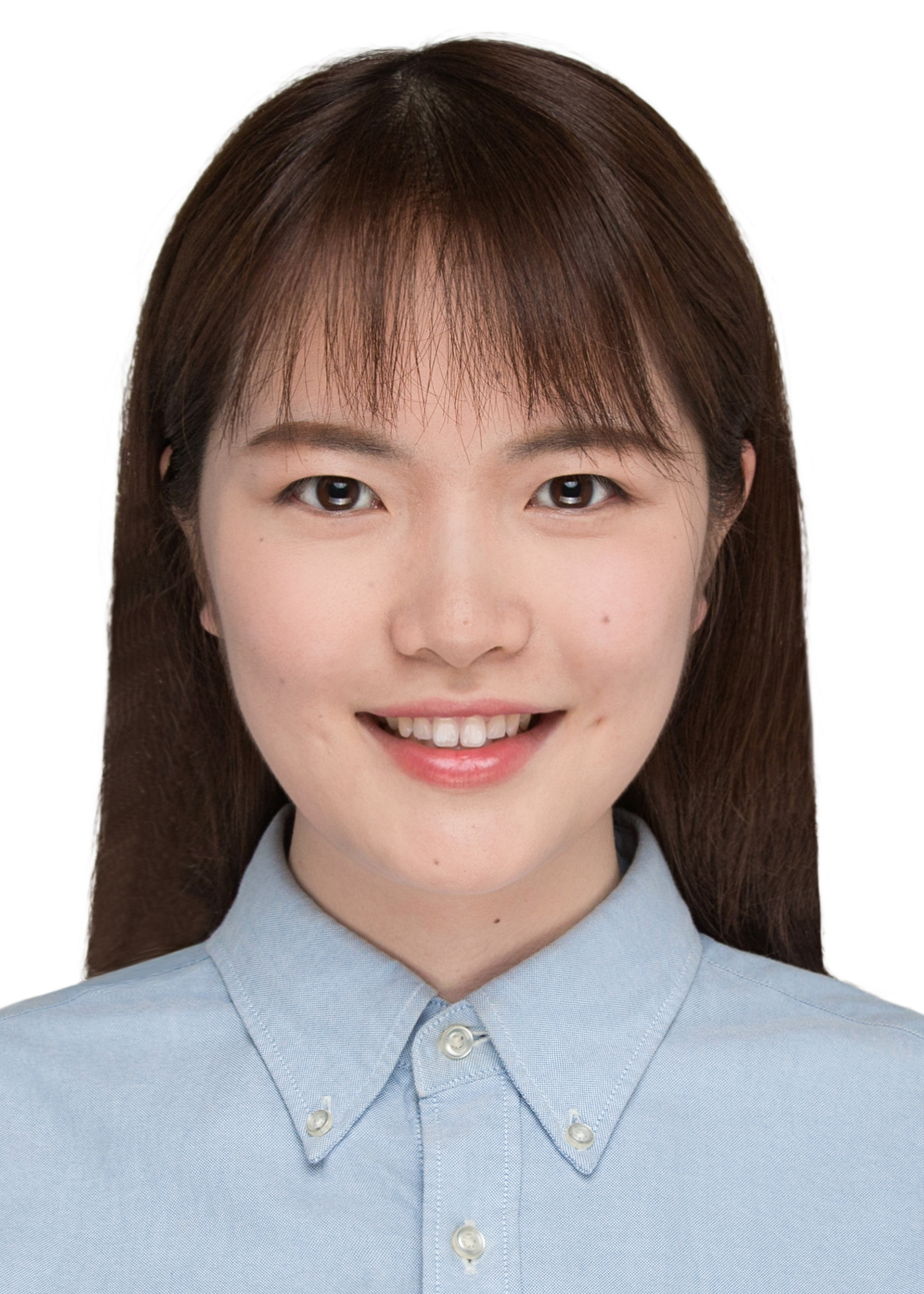}}]{Yanchao Tan}
received the BS degree in computer science from East China Normal University, China, in 2017. She is currently working toward the PhD degree in the College of Computer Science, Zhejiang University, China. Her research interests include recommender system and data mining.
\end{IEEEbiography}

\vspace*{-8ex}
\begin{IEEEbiography}[{\includegraphics[width=0.96in,height=1.1in,clip,keepaspectratio]
{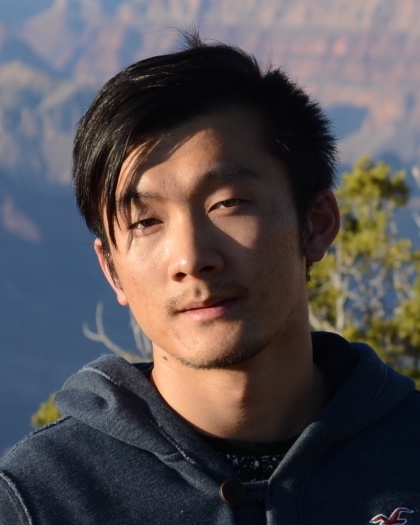}}]{Carl Yang} is an Assistant Professor in Emory University. He received his Ph.D.~in Computer Science at University of Illinois, Urbana-Champaign in 2020, and B.Eng.~in Computer Science and Engineering at Zhejiang University in 2014. His research interests span graph data mining, applied machine learning, structured information systems, with applications in knowledge graphs, recommender systems, biomedical informatics and healthcare. Carl's research results have been published in top venues like TKDE, KDD, WWW, NeurIPS, ICML, ICDE, SIGIR. He also received the Dissertation Completion Fellowship of UIUC in 2020 and the Best Paper Award of ICDM 2020.
\end{IEEEbiography}

\vspace*{-8ex}
\begin{IEEEbiography}[{\includegraphics[width=0.96in,height=1.1in,clip,keepaspectratio]
{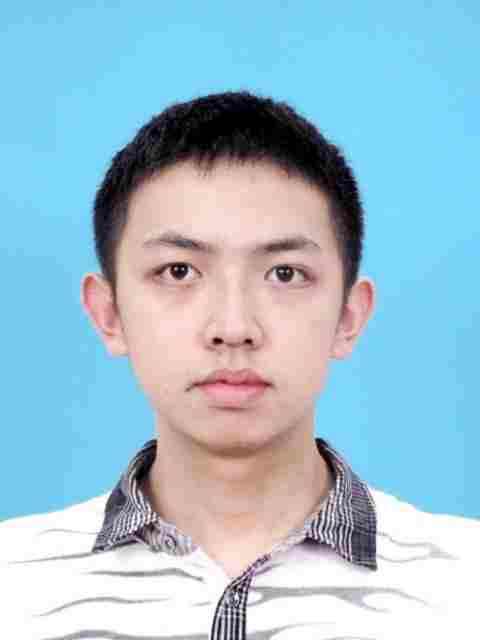}}]{Xiangyu Wei}
received the BS degree in software engineering from Zhejiang University, China, in 2019. He is currently working toward the Master degree in computer science, Zhejiang University, China. His research interests include recommender system and data mining.
\end{IEEEbiography}

\vspace*{-8ex}
\begin{IEEEbiography}[{\includegraphics[width=0.96in,height=1.1in,clip,keepaspectratio]
{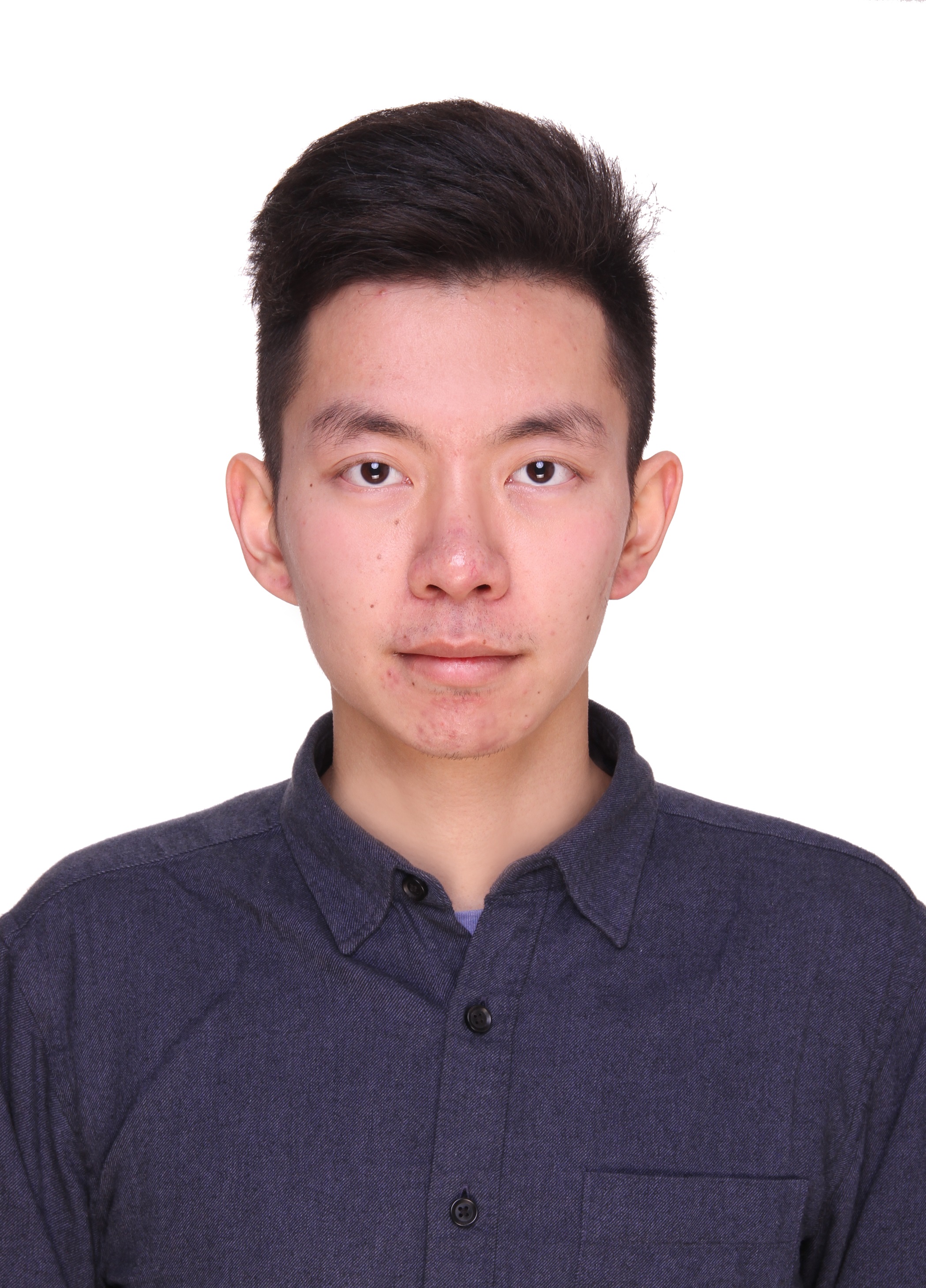}}]{Ziyue Wu}
received the BS degree in Statistics \& Information Management and Information Systems from Zhejiang University, China, in 2021. He is currently working toward the PhD degree in Management Science and Engineering, Zhejiang University, China. His research interests include econometrics and interpretable machine learning.
\end{IEEEbiography}

\vspace*{-8ex}
\begin{IEEEbiography}[{\includegraphics[width=0.96in,height=1.1in,clip,keepaspectratio]
{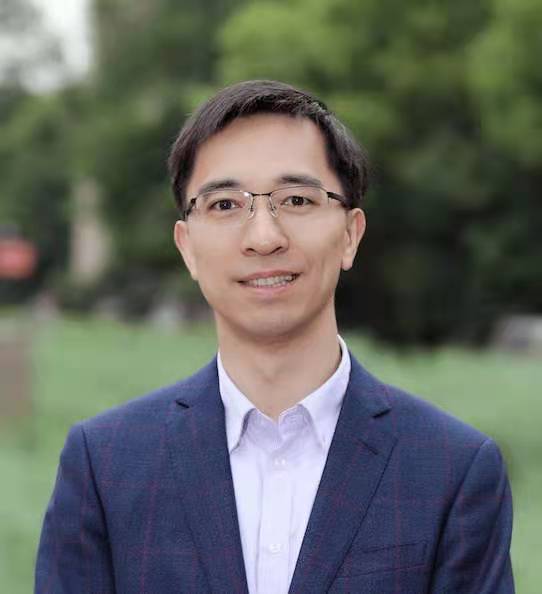}}]{Xiaolin Zheng}
PhD, Professor, PhD supervisor, and the deputy director of Institute of Artificial Intelligence, Zhejiang University. Senior member of IEEE and China Computer Federation, a committee member in Service
Computing of China Computer Federation, His main research interests include data mining, recommender system, and Intelligent finance.
\end{IEEEbiography}

\end{document}